\newcommand{\similar}[1]{\stackrel{#1}{\sim}}
\newcommand\change[1]{{\color{red}#1}}  % parts that change between proofs
\title{Why Canonical Rounds Fail for Optimal Byzantine Resilience}
\author{Hagit Attiya}{Technion, Haifa 3200003, Israel}{}{}{}
\author{Itay Flam}{Technion, Haifa 3200003, Israel}{}{}{}
\author{Jennifer L.\ Welch}{Texas A\&M University, College Station, TX 77843-3112, USA}{}{}{}
\authorrunning{Attiya, Flam and Welch}
\titlerunning{Why Canonical Rounds Fail for Optimal Byzantine Resilience}
\keywords{
randomized consensus,
approximate agreement,
crusader agreement,
connected consensus,
reliable broadcast,
gather}
\begin{document}

\maketitle 

\begin{abstract}
Canonical asynchronous rounds are a widely used abstraction for structuring distributed algorithms, making asynchronous executions appear synchronous and enabling modular reasoning. 
We show that this abstraction is fundamentally incompatible with optimal resilience in the Byzantine setting, even when randomization is allowed. 
Specifically, we prove that when $3f < n \le 5f$, where $n$ is the number of processes and at most $f$ may be Byzantine faulty, no randomized canonical-round algorithm can solve consensus with bounded expected round complexity, and that communication-closed variants fail to solve consensus altogether in this regime.

We establish these lower bounds via a unifying notion of \emph{nontrivial convergence}, which captures consensus as well as classical relaxations, such as approximate agreement and connected consensus. 
Using simple reductions, the same impossibility extends to fundamental communication primitives such as reliable broadcast and gather.
Our results identify a sharp boundary: while bounded canonical-round algorithms for these problems exist when $n > 5f$, they cannot when $n \le 5f$.  
Thus optimal resilience, $n > 3f$, cannot be achieved within the canonical-round framework.

On the positive side, we show that the gather primitive captures the content-dependent communication needed to bypass this limitation. 
We use gather to obtain a simple and modular algorithm for connected consensus with optimal resilience, clarifying the communication structures required for optimal resilience. 
\end{abstract}

\ccsdesc[500]{Theory of computation~Distributed algorithms}

\keywords{randomized consensus, approximate agreement, crusader agreement, connected consensus, reliable broadcast, gather}

\section{Introduction} 
\label{sec:introduction}

Consensus is the fundamental coordination problem in distributed
computing.  In asynchronous systems with Byzantine failures,
deterministic consensus is impossible, but randomized consensus is
solvable with optimal resilience $n > 3f$, 
where $n$ is the number of processes and 
$f$ is the maximum number of faulty processes.  
Randomized Byzantine
consensus under optimal resilience lies at the core of many
fault-tolerant systems, including modern blockchain protocols.

Ben-Or's pioneering randomized consensus algorithm~\cite{BenOr1983}
uses a simple communication pattern called \emph{canonical
(asynchronous) rounds}~\cite{Fekete1994asynchronous},
in which processes repeatedly exchange round-tagged
messages, wait for $n-f$ messages from the current round, and then
advance.  Besides its conceptual simplicity, counting the number of
asynchronous rounds is a natural way to analyze performance.
However, Ben-Or's algorithm requires $n > 5f$ and takes
an exponential expected number of rounds, both of which are sub-optimal.

Work on developing improved algorithms for randomized consensus continues to
this day, with much emphasis on achieving optimal resilience of $n >
3f$.  These algorithms are sophisticated, and their communication
patterns are more involved than that in~\cite{BenOr1983}.  
For instance, Canetti and Rabin's algorithm~\cite{CanettiR1993} runs in
constant expected time
and requires $n > 3f$, but is an intricate combination of primitives.
One of these primitives is
Bracha's \emph{reliable broadcast} algorithm~\cite{Bracha1987}, 
which prevents faulty processes from equivocating by ensuring that correct processes agree on the value sent by a given sender.  
This algorithm,
reproduced as Algorithm~\ref{alg:bracha}, relies on content-dependent
wait conditions that are not satisfied by merely advancing through
canonical rounds.  For example, Step~1 is not complete until
various numbers of various kinds of messages have been received.

\begin{algorithm}[tb]
  \caption{Reliable Broadcast ($n>3f$), from~\cite{Bracha1987}}
  \label{alg:bracha}
  \small
  \begin{algorithmic}[1]
    \State sender sends $(\mathit{initial},v)$ to all processes
           \Comment{\textbf{Step 0}, triggered by r-broadcast($v$)}
    \State wait until the receipt of:
        \Comment{\textbf{Step 1}} \label{line:bra:step 1}
    \Statex \quad 
        one $(\mathit{initial}, v)$ message
        or $(n+f)/2$ $(\mathit{echo},v)$ messages
        or $f+1$ $(\mathit{ready},v)$ messages 
        for some $v$  
    \State send $(\mathit{echo},v)$ to all the processes.
    \State wait until the receipt of:
        \Comment{\textbf{Step 2}} \label{line:bra:step 2}
    \Statex \quad 
        $(n+f)/2$ $(\mathit{echo},v)$ messages
        or $f+1$ $(\mathit{ready},v)$ messages 
        (including ones received in step 1) 
    \Statex \quad\quad\quad for some $v$ 
    \State send $(\mathit{ready},v)$ to all the processes
    \State wait until the receipt of: 
        \Comment{\textbf{Step 3}} \label{line:bra:step 3}    
    \Statex \quad 
        $2f+1$ $(\mathit{ready},v)$ messages 
        (including ones received in steps 1 and 2) 
        for some $v$
    \State \textbf{r-accept} $v$
  \end{algorithmic}
\end{algorithm}

This raises a natural question: \emph{can optimal-resilience randomized consensus be achieved within the simple structure of canonical rounds?}
A strawman attempt to translate Algorithm~\ref{alg:bracha} into canonical rounds is shown in Algorithm~\ref{alg:bracha canon}.
Unfortunately, it can experience an unbounded number of rounds:
Even with a correct sender, the
sender's {\it initial} message can be delayed arbitrarily long, while
canonical rounds consisting largely of {\it null} messages continue to
advance.  
In contrast, the \emph{time complexity} of this algorithm is constant,
when measured as the real time elapsed
until termination assuming every message sent between correct
processes is delivered within one time unit~\cite{AttiyaW2004}:
Though Byzantine processes may ``rush'' correct processes through many
canonical rounds, as long as the sender's {\it initial} message is
in transit, less than one time unit can elapse.
Thus canonical round complexity can diverge arbitrarily from real time: the algorithm terminates in constant time while requiring an unbounded number of canonical rounds.

\begin{algorithm}[bt]
  \caption{Canonical-Round Reliable Broadcast ($n>3f$),
    based on~\cite{Bracha1987}; code for process $p_i$}
  \label{alg:bracha canon}
  \small
  \begin{algorithmic}[1]
  \Statex local variable {\it round}, initially 1
  \State {\bf WakeUp:} \Comment{triggered by r-broadcast$(v)$ at the sender and r-broadcast$(-)$ at other processes}
  \If{$p_i$ is the sender} 
      send $(${\it initial}$,v)$ labeled with round 1 to all processes
  \Else ~send$(${\it null}$)$ labeled with round 1 to all processes \textbf{endif}
  \EndIf
  \State {\bf Receive message $m$ with round label $r$:}
  \If{$r =$ {\it round} and $p_i$ has now received $n-f$ round $r$ messages} 
  \State $S := \emptyset$  \Comment{set of messages to send for round $r+1$}
    \If{$p_i$ has received 
         $(${\it initial}$,v)$ or 
        $(n+f)/2$ $(${\it echo}$,v)$'s or 
        $f+1$ $(${\it ready}$,v)$'s 
    \Statex \quad\quad\quad and not sent $(${\it echo}$,v)$} 
        add $(${\it echo}$,v)$ to $S$ \textbf{endif} \EndIf
  \If{$p_i$ has received 
        $(n+f)/2$ $(${\it echo}$,v)$'s or 
        $f+1$ $(${\it ready}$,v)$'s 
  \Statex \quad\quad\quad and not sent $(${\it ready}$,v)$} 
        add $(${\it ready}$,v)$ to $S$ \textbf{endif} \EndIf
  \If{$p_i$ has received $2f+1$ $(${\it ready}$,v)$'s 
  \Statex \quad\quad\quad and not done r-accept}
        {\bf r-accept} $v$ \textbf{endif} \EndIf
  \State {\it round} $:=$ {\it round} $+ 1$
  \If{$S = \emptyset$} $S := {(null)}$ \textbf{endif} \EndIf
  \State send messages in $S$ labeled with current value of {\it round} variable to all processes
  \Statex \textbf{endif} \EndIf
 \end{algorithmic}
\end{algorithm}

This bad behavior is not simply an artifact of poor translation.  We
identify a fundamental limitation of canonical round-based
abstractions in the Byzantine setting.  \emph{When $3f < n \le 5f$, no
randomized canonical-round algorithm can solve Byzantine consensus
with bounded expected round complexity.  
Moreover, if communication closure is imposed, which requires early and late messages to be discarded, canonical-round algorithms fail to solve consensus altogether in this resilience regime.}

These results are not restricted to randomized consensus.  Our theorems
are proved for a generalization of consensus that we call
\emph{nontrivial convergence}.  
This class of problems includes \emph{approximate agreement} 
(both on the real numbers and on graphs) and
\emph{connected consensus} (including its special cases 
\emph{crusader agreement} and \emph{gradecast}).  
Unlike consensus, these problems have deterministic solutions in asynchronous systems.
The deterministic versions of our lower bounds imply that no optimally-resilient canonical-round algorithm for any of these problems can have bounded worst-case round complexity, and no communication-closed algorithms exist.

We extend these limitations to two communication primitives,
reliable broadcast and gather, via reductions from crusader agreement.
The reduction to reliable broadcast justifies the poor behavior of
Algorithm~\ref{alg:bracha canon}.  
Canonical rounds enforce progress based solely on \emph{message counts}, whereas optimally-resilient Byzantine algorithms require progress conditions that depend on \emph{message contents}.
This is exemplified with \emph{gather}, which captures a form of
content-dependent communication that canonical rounds exclude:
informally, gather allows each process to collect a large set of
values such that all correct processes share a substantial
\emph{common core}.  
There are existing gather implementations with optimal resilience and constant time complexity~\cite{SternA2021}
(see~Appendix~\ref{app:gather}).

We extend the reduction used to establish the impossibility results
for gather into a simple and modular algorithm for connected
consensus for any parameter~$R$, yielding an optimally-resilient
algorithm whose running time is logarithmic in~$R$.  An additional
advantage of this reduction is that it preserves the \emph{binding}
property~\cite{AbrahamBDY2022}: once the first correct process
decides, the adversary cannot influence the decisions of other correct
processes.  When gather satisfies binding, the resulting
connected-consensus algorithm inherits this property directly. 

In summary, this paper makes the following contributions relating to
asynchronous systems subject to Byzantine failures.
\begin{itemize}
\item 
We show that no randomized canonical-round algorithm can solve a nontrivial convergence problem with bounded expected round complexity when $3f < n \le 5f$, and that communication-closed canonical-round algorithms fail completely in this regime.  
\emph{Thus there can be no
such algorithms for randomized consensus that have optimal resilience
$n > 3f$.}

\item The deterministic versions of our lower bounds imply that no canonical-round algorithms have bounded worst-case round complexity, and no communication-closed canonical-round algorithms exist at all, for approximate agreement and $R$-connected consensus that have optimal resilience $n > 3f$.

\item Simple reductions from crusader agreement extend our lower bounds to two popular communication primitives, gather and reliable broadcast.

\item The reduction from crusader agreement to gather forms the basis of
a modular $R$-connected consensus algorithm, for any $R \ge 1$,
that has optimal resilience and preserves the binding property.
Beyond highlighting the usefulness of gather,
the resulting algorithm is of independent interest.
\end{itemize}

\subsection*{Related Work} 
Round-based abstractions, such as
canonical rounds~\cite{Fekete1994asynchronous}, 
the \emph{Heard-Of model}~\cite{CharronBostSchiper2009}
and communication-closed layers~\cite{ElradF1982}, 
make asynchronous executions appear
synchronous which enables modular reasoning and intuitive running time
calculations.  
In the canonical round structure, processes must move to the next
round as soon as a fixed \emph{number} of messages have been received,
regardless of the message contents or the identities of the senders.
Several well-known Byzantine-tolerant algorithms are
structured in canonical rounds, including ones for randomized
consensus~\cite{BenOr1983}, approximate
agreement~\cite{DolevLPSW1986}, and crusader
agreement~\cite{AttiyaW2023}, albeit assuming $n > 5f$.
In contrast, many optimally resilient Byzantine algorithms rely
critically on content-dependent communication patterns, including
those for randomized consensus~\cite{CanettiR1993}, crusader
agreement~\cite{Dolev1982,AbrahamBDY2022,AbrahamBDSY2023} and its
generalization to connected consensus~\cite{AttiyaW2023}, approximate
agreement~\cite{Coan1988,AbrahamAD2004},
gather~\cite{CanettiR1993,SternA2021}, and reliable
broadcast~\cite{Bracha1987}.  
Our results explain the dichotomy between algorithms that require 
$n>5f$ and those that achieve optimal resilience $n>3f$.

Round-based abstractions have played an important role in
structuring, analyzing, and verifying distributed algorithms, and have
been extended to Byzantine transmission faults~\cite{BielyEtAl2007}.
These abstractions are particularly effective for benign fault models~\cite{AttiyaCGT2025}
and for systematic testing and
verification~\cite{damian2019communication,DragoiEtAl2020}.  Our
impossibility results clarify their limitations in the Byzantine
setting: while useful for reasoning and verification, they cannot
capture the communication patterns required for optimal-resilience
Byzantine coordination.

In contrast to some early work which gave the impression implicitly
that asynchronous algorithms could be structured as canonical rounds
without any loss of generality
(e.g.,~\cite{DolevLPSW1986,ChorD1989,Fekete1994asynchronous}), later
authors indicated some skepticism on this point
(e.g.,~\cite{CharronBostSchiper2009}).  
Lewko~\cite{Lewko2011arxiv} in
particular notes that it ``is also worth considering if the seemingly
natural notion of a round (imported from the synchronous setting) may
have a restrictive effect on our thinking in the asynchronous
setting.'' Our results show that the notion of round is indeed
restrictive when optimal Byzantine resilience is desired.

\section{Preliminaries}
\label{sec:model}
In this section, we present our model of computation.
We also define a generic problem, called ``nontrivial convergence'', and show that several well-known problems are special cases of it.
We then define canonical-round algorithms and present our complexity measures.

\subsection{Model of Computation}

We assume the standard asynchronous model for $n$ processes, up to $f$ of which can be faulty, in which processes communicate via reliable point-to-point messages.
We consider \emph{Byzantine} failures, where a faulty process can change state arbitrarily and send messages with arbitrary content.

In more detail, we assume a set of $n$ processes, each modeled as a
state machine.
Each process has a subset of initial states, with one state corresponding to
each element of a set $V$, denoting its input.
The transitions of the state machine are triggered by 
two kinds of {\em events}:  spontaneous WakeUp and receipt of a message.
A transition takes the current state of the process, a random number, and an incoming message (if any),
and produces a new state of the process and a set of messages to be
sent to any subset of the processes.
The state set of a process contains
a collection of disjoint subsets,
each one modeling the fact that a particular decision has been taken;
once a process enters the subset of states for a specific decision,
the transition function ensures that it never leaves that subset.

A \emph{configuration} of the system is a vector of process states,
one for each process, and a set of in-transit messages.
In an \emph{initial configuration}, each process is in an initial state and no
messages are in transit.
Given a subset of at most $f$ processes that are \emph{faulty} with the rest being \emph{correct}, we define an {\em execution} as a sequence $\alpha$ of alternating
configurations and events $C_0, e_1, C_1,\ldots$ such that:
\begin{itemize}
\item $C_0$ is an initial configuration.
\item The first event for each process is WakeUp.  
      A correct process experiences exactly one WakeUp 
      and a faulty process can experience any number of WakeUps. 
      The WakeUp can either be spontaneous (e.g., triggered by the invocation of the algorithm)
      or in response to the receipt of a message.
\item Suppose $e_i$ is an event in which process $p$ receives message $m$ sent by process $q$.
      Then $m$ is an element of the set of in-transit messages in $C_{i-1}$ and it is the oldest
      in-transit message sent by $q$ to $p$. This means that point-to-point links are FIFO.
\item Suppose $e_i$ is a step by correct process $p$
  and let $s$ and $M$ be the state and set of messages resulting from $p$'s
  transition function applied to $p$'s state in $C_i$, a random number, and, 
  if $e_i$ is a receive event, the message $m$ being received.
  Then the only differences between $C_i$ and $C_{i+1}$ are that in $C_{i+1}$, $m$
  is no longer in transit, $M$ is in transit, and $p$'s state is $s$.
  If $p$ is Byzantine, then $s$ and $M$ can be anything.
\item If $\alpha$ is infinite, then every correct process takes infinitely many steps and every message sent by a process to a correct process is eventually received.
\end{itemize}

We model the variability in asynchronous executions, other than that caused by random numbers,
by an \emph{adversary}, which is a function that takes the current prefix of the execution and returns the next event to occur.  
The adversary is subject to the constraints indicated above.
The adversary is \emph{full information} as it can see the random choices made so far by the processes which are reflected in the process states and message contents, and \emph{static} as it chooses the faulty processes initially.

We say that executions $\alpha$ and $\beta$ are {\it indistinguishable} to a set of processes $X$, denoted $\alpha \similar{X} \beta$, if, for each process $p$ in $X$, $p$ has the same initial state and experiences the same sequence of events and random numbers in $\alpha$ as in $\beta$.

To measure time complexity in an asynchronous message-passing system, we adopt the definition in~\cite{AttiyaW2004}:
We start by defining a timed
execution as an execution in which nondecreasing nonnegative integers
(``times'') are assigned to the events, with no two events by the same
process having the same time.  For each timed execution, we
consider the prefix ending when the last correct process decides, and
then scale the times so that the \emph{maximum time} that elapses between the
sending and receipt of any message between correct processes
is 1.  We define the time complexity as the maximum, over all such scaled timed
execution prefixes, of the time assigned to the last event
minus the latest time when any (correct) process wakes up.
We sometimes assume, for simplicity, that the first WakeUp event of each process
occurs at time 0.

\subsection{Nontrivial Convergence Problems}

Our lower bound results are
proved for a generic \emph{nontrivial convergence} problem 
in which there are at least two possible input values $x_0$ and $x_1$ and 
at least two decision values $d_0$ and $d_1$, such that:
\begin{itemize}
    \item {\bf Agreement:} if a correct process decides $d_0$ in an execution, 
    then no correct process can decide $d_1$ in the same execution. 
    \item {\bf Validity:} if all correct processes have input $x_i$, then every decision by a correct process is $d_i$, for $i = 0,1$.
\end{itemize}
For \emph{deterministic} algorithms, we also require:
\begin{itemize}
    \item {\bf Termination:} 
    All correct processes eventually decide.
\end{itemize}
When considering \emph{randomized} algorithms,
Agreement and Validity must always hold, 
but Termination is probabilistic.  
See Section~\ref{subsec:canonical-round-defs} for details.

{\bf Consensus}~\cite{PeaseSL1980} is clearly a nontrivial convergence problem.  In addition to termination, all 
decisions by correct processes must be the same (\emph{agreement}), and
if a correct process decides $v$, 
then some correct process has input $v$ (\emph{validity}).
Several well-known relaxations of consensus
also belong in this class, as we show next.

{\bf Approximate agreement on the real numbers} with
parameter $\epsilon > 0$~\cite{DolevLPSW1986} is defined as follows.
Processes start with arbitrary real numbers
and correct processes must decide on real numbers that are at most $\epsilon$ apart from each other (\emph{agreement}).
Decisions must also be contained in the interval of the inputs of correct processes (\emph{validity}). 
To show approximate agreement is a nontrivial convergence problem, choose any two real numbers whose difference is greater than $\epsilon$ as the two distinguished inputs and two distinguished decisions.  

{\bf Approximate agreement on graphs}~\cite{NowakR2019}
has each process start with a vertex of a graph $G$ as its input.
Correct processes must decide on vertices such that all decisions are within distance one of each other (\emph{agreement}) and inside the convex hull of the inputs (\emph{validity}).
When all processes start with the same vertex, validity implies they must decide on this vertex.
As long as the graph $G$ has two vertices that are at distance 2 apart, 
we can choose these vertices as the two distinguished inputs and two distinguished decisions to show that approximate agreement on $G$ is a nontrivial convergence problem.

{\bf Connected consensus}~\cite{AttiyaW2023} with parameter $R$ and input set $V$ ensures that correct processes decide on elements in $D = \{(v,g) | v \in V$ and $1 \le g \le R\} \cup \{\bot,0\}$; we refer to $v$ as the \emph{branch} and $g$ as the \emph{grade}.  
If two correct processes decide $(v,g)$ and $(v',g')$, then $|g-g'| \le 1$ and $v = v'$ unless one is $\bot$ (\emph{agreement}).
In addition, if $v \ne \bot$, then some correct process has input $v$, and if all correct processes have the same input, then $g = R$ (\emph{validity}).
To show that this is a nontrivial convergence problem, let $v$ and $v'$ in $V$ be the distinguished inputs and $(v,R)$ and $(v',R)$ be the distinguished decisions.
When $R = 1$,
we get {\bf crusader agreement}~\cite{Dolev1982}.
When $R = 2$,
we get {\bf gradecast}~\cite{FeldmanM1997}. 

\subsection{
Canonical-Round Algorithms and Their Round Complexity}
\label{subsec:canonical-round-defs}

In a {\emph canonical-round algorithm},
every message sent by a correct process consists of a round number
(a positive integer), the process' history (its initial state followed
by the sequence of random numbers and sequence of messages it has received so far), and a random
number drawn from a set $\mathcal{R}$.  
For deterministic algorithms, $|\mathcal{R}| = 1$ and thus we can 
drop that component of the messages.

Each correct process wakes up and sends to all processes its round 1 message, consisting of the label 1, its initial state, and (optionally)
a random number.
This step is called round 0 for the process.
All messages that arrive at a process are saved, including those
that arrive before it wakes up.
Once a process has received $n-f$ round $r$ messages, where $r$ is the
round number of the most recent message it sent, it sends its round $r+1$
message to all processes.
The interval starting immediately after a process sends its round $r$
message and ending when it sends its round $r+1$ message is called round $r$ for the process.
In each round, a process applies a function to its current history to see if it can decide yet. 
Even if it decides, it continues to send messages, in order to help other processes make progress toward decision.

Given an execution of a canonical-round algorithm, we define the (canonical) round complexity of the execution to be the smallest round $r$ such that every correct process decides in or before its round $r$.
The {\it round complexity} of a deterministic algorithm is the largest round complexity of any execution of the algorithm.

We next define the expected round complexity of a randomized algorithm (cf.~\cite{AttiyaW2004}).
An execution of a specific algorithm is uniquely determined by an adversary $\mathcal{A}$, an initial configuration $I$, and a collection of random numbers $\rho$ accessed by the processes, denoted $exec(\mathcal{A},I,\rho)$.
Given a predicate $P$ on executions, $\Pr[\mathcal{A},I,P]$ is the probability of $\{\rho : exec(\mathcal{A},I,\rho)$ satisfies $P\}$, i.e., the probability taken over the random numbers that the execution produced by $\mathcal{A}$ starting with $I$ satisfies $P$.
Let $T$ be a random variable on executions, in our case, the round complexity.  For a fixed $\mathcal{A}$ and $I$, the expected value of $T$, denoted $EX[\mathcal{A},I,T]$, is $\sum_x x \cdot \Pr[\mathcal{A},I,T=x]$.
The {\it expected round complexity} of the algorithm, denoted $EX[T]$, is the maximum, over all adversaries $\mathcal{A}$ and all initial configurations $I$, of $EX[\mathcal{A},I,T]$.

\section{Limitations of Canonical-Round Algorithms}
\label{sec:lb}

This section contains our lower bound results for
canonical-round algorithms concerning the number of rounds
(Section~\ref{subsec:lb-rounds}) and impossibility when communication-closed (Section~\ref{subsec:lb-cc}).

\subsection{Unbounded Number of Canonical Rounds}
\label{subsec:lb-rounds}

\begin{theorem}
\label{theorem:unbounded canonical rounds}
For any randomized canonical-round algorithm that solves the nontrivial convergence problem with $n \le 5f$,
the expected number of rounds until all correct processes decide is greater than $K$, for any integer $K \in \mathbb{N}$. 
\end{theorem}

\begin{proof}  Assume towards a contradiction that there exists a canon\-ical-round algorithm for nontrivial convergence with $n \le 5f$ whose expected
number of rounds until all correct processes decide is at most $K$.
For convenience, denote the specific values $x_0$, $x_1$, $d_0$, and $d_1$
in the definition of nontrivial convergence by $0$, $1$, $0$, and $1$
respectively.

For simplicity, we assume $n = 5f$, and divide the processes into five
disjoint sets of $f$ processes each:  $A$, $B$, $C$, $D$, and $E$.
We abuse notation and, for any two sets $X$ and $Y$, denote $X \cup Y$
by $XY$.

We consider the following initial configurations of the algorithm:
\begin{itemize}
\item Denote by $C_0$ the initial configuration in which processes in
$BCDE$ are correct and have input 0, while processes in $A$ are Byzantine.
\item Denote by $C_1$ the initial configuration in which processes in
$ABDE$ are correct and have input 1, while processes in $C$ are Byzantine.
\item Denote by $C_2$ the initial configuration in which processes in
$ABCD$ are correct, those in $BC$ have input 0 and those in $AD$ have input 1, 
while processes in $E$ are Byzantine.
\end{itemize}
For each $C_j$, let $C_j[i]$ denote $p_i$'s input in $C_j$.

We next define a set of executions of the algorithm that begin with $C_2$.
For each of these executions, we define a related execution that begins with $C_0$ and another one that begins with $C_1$.
All these executions are inductively constructed, round by round, with the behavior of the faulty processes in each one being dependent on the behavior of some correct processes in others.
(See Figures~\ref{figure:three executions} and~\ref{fig:combined}.)

\begin{figure}
\includegraphics[width=0.95\linewidth]{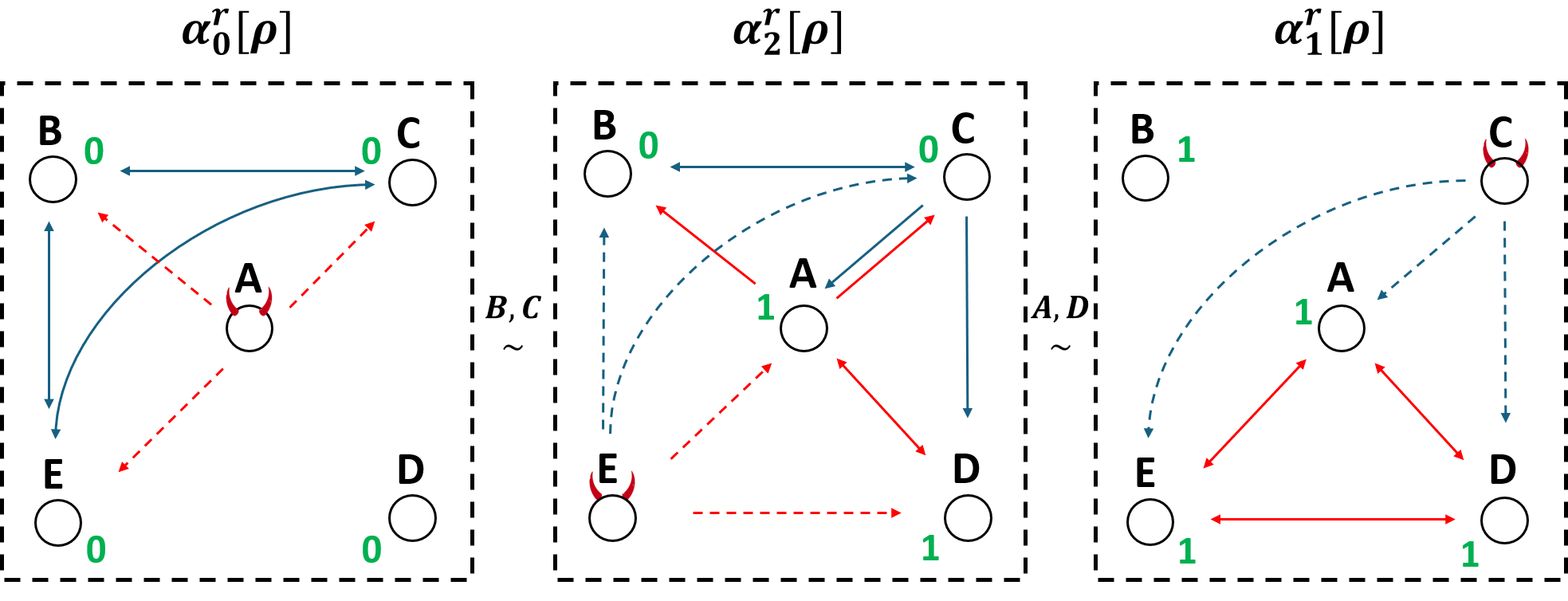}
\caption{Schematic description of the executions 
used in the proof of Theorem~\ref{theorem:unbounded canonical rounds}.
Messages are represented by directed arrows, with dotted arrows being those sent by Byzantine processes.
Blue arrows correspond to 0 inputs and red to 1.  
Note two-faced behavior of $E$ in $\alpha_2^r[\rho]$.
} % end caption
\label{figure:three executions}
%\Description{Schematic description of the executions used in the proof of the main impossibility result.}
\end{figure}

\begin{figure}

    \centering
    \includegraphics[width=0.9\textwidth]{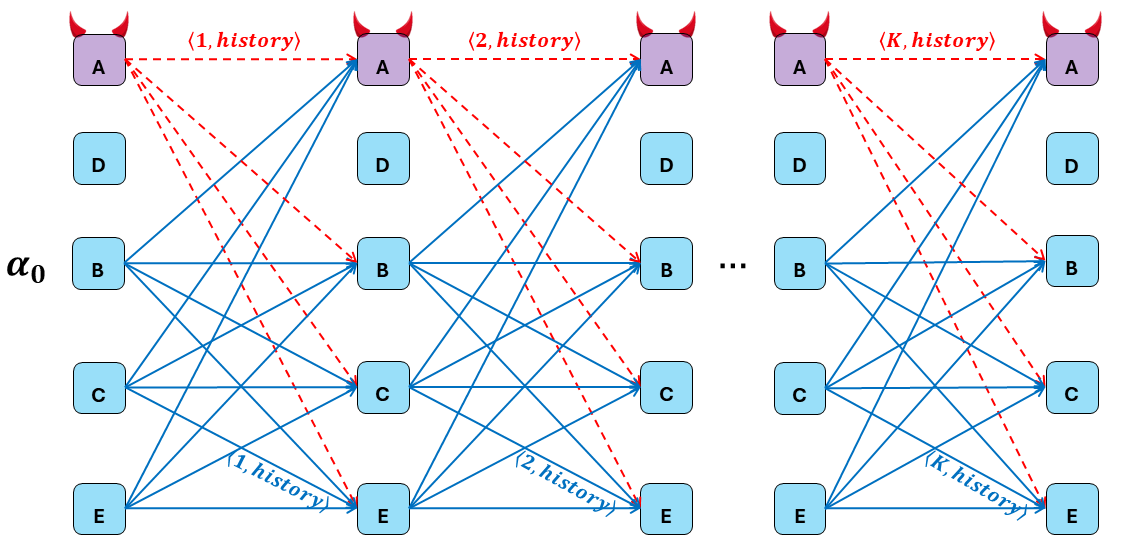}
%    \vspace{0.5cm} % Space between illustrations
    
    \includegraphics[width=0.9\textwidth]{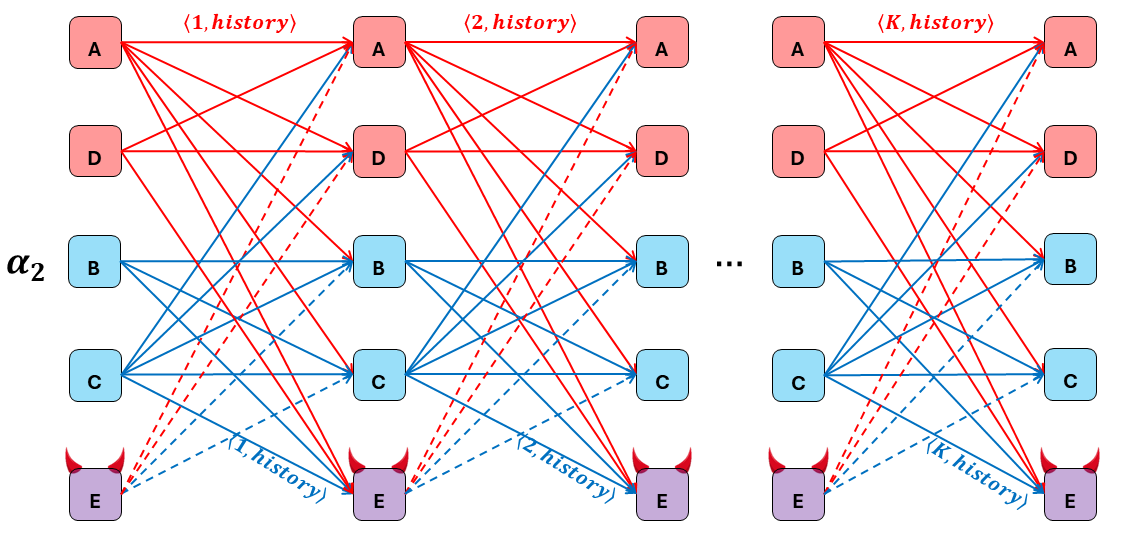}
%    \vspace{0.5cm} % Space between illustrations
    
    \includegraphics[width=0.9\textwidth]{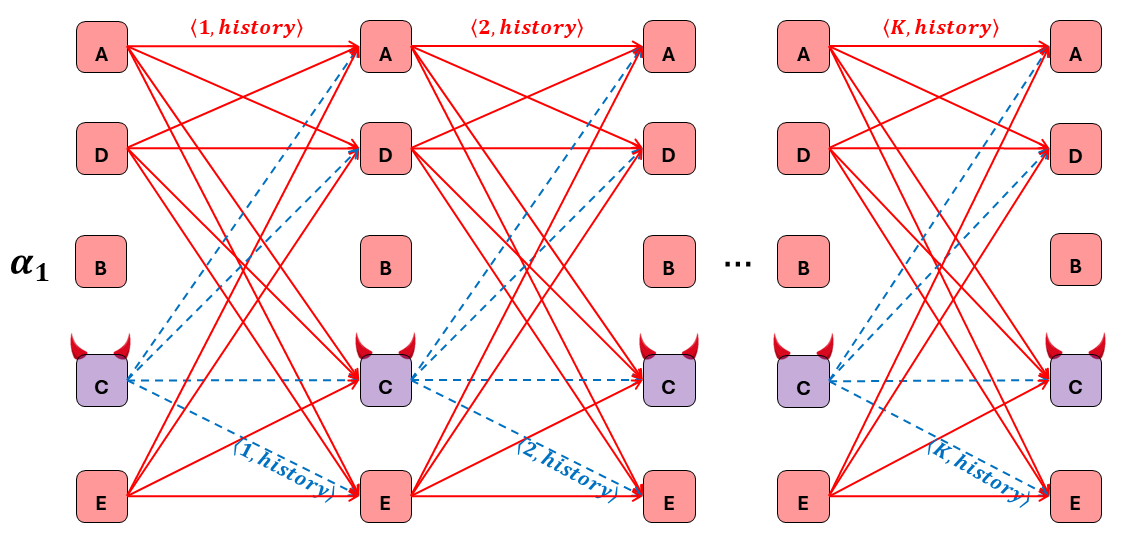}
    
    \caption{Executions $\alpha_0$ (top), $\alpha_2$ (middle), $\alpha_1$ (bottom). Processes colored blue are initialized with input 0, red with input 1 and purple are Byzantine. 
    Note that in $\alpha_2$, messages from processes in group $B$ to processes in groups $A,D$ are
    delayed until after round $K$, and the same is true for messages from processes in group $D$ to
    processes in groups $B,C$.}
    \label{fig:combined}

\end{figure}

The construction uses a $(K+1) \times 5f$ array of elements from $\mathcal{R}$,
denoted $\rho$, where $\mathcal{R}$ is the set from which the random numbers
are chosen.
For each $r$ and $i$, $\rho[r,i]$ contains the random choice made by
process $p_i$ in round $r$ of certain executions.

\paragraph*{Basis step of the construction:}
Define execution $\alpha_2^0[\rho]$, in which processes wake up and send
their round 1 messages, as follows:
\begin{itemize}
\item Start with $C_2$.
\item $ABCDE$ wake up.
\item Each (correct) $p_i \in ABCD$ sends $(1,C_2[i],\rho[1,i])$ 
      to $ABCDE$.
\item Each (faulty) $p_i \in E$ sends $(1,C_0[i],\rho[1,i])$ to $BC$
      and $(1,C_1[i],$ $\rho[1,i])$ to $AD$.  That is, $p_i$
      pretends to $BC$ that its input is 0
      and pretends to $AD$ that its input is 1.
\end{itemize}

Next we construct execution $\alpha_0^0[\rho]$ as follows:
\begin{itemize}
\item Start with $C_0$.
\item $ABCE$ wake up.  (Processes in $D$ are late.)
\item Each (correct) $p_i \in BCE$ sends $(1,C_0[i],\rho[1,i])$ to $ABCDE$.
\item Each (faulty) $p_i \in A$ sends $(1,C_2[i],\rho[1,i])$ to $ABCDE$.
      That is, $p_i$ pretends that it is correct in $\alpha_2^0[\rho]$.
\end{itemize}

Similarly, we construct execution $\alpha_1^0[\rho]$ as follows:
\begin{itemize}
\item Start with $C_1$.
\item $ACDE$ wake up.  (Processes in $B$ are late.)
\item Each (correct) $p_i \in ADE$ sends $(1,C_1[i],\rho[1,i])$ to $ABCDE$.
\item Each (faulty) $p_i \in C$ sends $(1,C_2[i],\rho[1,i])$ to $ABCDE$.
      That is, $p_i$ pretends that it is correct in $\alpha_2^0[\rho]$.
\end{itemize}

\paragraph*{Inductive step of the construction:}
For $r$, $1 \le r \le K$, assume that executions $\alpha_i^{r-1}[\rho]$,
$i = 0,1,2$, have already been defined.

Define execution $\alpha_2^r[\rho]$, in which processes receive round $r$
messages and send round $r+1$ messages, as follows:
\begin{itemize}
\item Start with $\alpha_2^{r-1}[\rho]$.
\item $AD$ receive the round $r$ messages which are in transit at the
      end of $\alpha_2^{r-1}[\rho]$ from $ACDE$.
\item $BC$ receive the round $r$ messages which are in transit at the
      end of $\alpha_2^{r-1}[\rho]$ from $ABCE$.
\item Each (correct) $p_i \in AD$ sends $(r+1,h_i,\rho[r+1,i])$ to $ABCDE$, 
      where $h_i$ is $p_i$'s history at the end of
      $\alpha_2^{r-1}[\rho]$ followed by the round $r$ messages just 
      received from $ACDE$.
\item Each (correct) $p_i \in BC$ sends $(r+1,h_i,\rho[r+1,i])$ to $ABCDE$, 
      where $h_i$ is $p_i$'s history at the end of
      $\alpha_2^{r-1}[\rho]$ followed by the round $r$ messages just
      received from $ABCE$.
\item Each (faulty) $p_i \in E$ sends $(r+1,h_i^0,\rho[r+1,i])$ 
      to $BC$, where $h_i^0$ is the state of $p_i$
      at the end of $\alpha_0^{r-1}[\rho]$ followed by the round $r$ messages
      in transit from $ABCE$ to $p_i$ at the end of $\alpha_0^{r-1}[\rho]$.
\item Each (faulty) $p_i \in E$ sends $(r+1,h_i^1,\rho[r+1,i])$ 
      to $AD$, where $h_i^1$ is the state of $p_i$
      at the end of $\alpha_1^{r-1}[\rho]$ followed by the round $r$ messages
      in transit from $ACDE$ to $p_i$ at the end of $\alpha_1^{r-1}[\rho]$.
\end{itemize}
That is, each faulty process in $E$ pretends to $BC$ that it is correct
in $\alpha_0^r[\rho]$ and pretends to $AD$ that it is correct
in $\alpha_1^r[\rho]$.

\begin{observation}
Every process completes $r$ canonical rounds in $\alpha_2^r[\rho]$.
\end{observation}

Next we construct execution $\alpha_0^r[\rho]$ as follows:
\begin{itemize}
\item Start with $\alpha_0^{r-1}[\rho]$.
\item $ABCE$ receive round $r$ messages which are in transit at the end of
      $\alpha_0^{r-1}[\rho]$ from $ABCE$.
\item Each (correct) $p_i \in BCE$ sends $(r+1,h_i,\rho[r+1,i])$ to
      $ABCDE$, where $h_i$ is $p_i$'s history at the end of $\alpha_0^{r-1}[\rho]$
      followed by the round $r$ messages it just received from $ABCE$.
\item Each (faulty) $p_i \in A$ sends $(r+1,h_i,\rho[r+1,i])$ to $ABCDE$,
      where $h_i$ is $p_i$'s history at the end of $\alpha_2^{r-1}[\rho]$
      followed by the round $r$ messages in transit from $ACDE$
      to $p_i$ at the end of $\alpha_2^{r-1}[\rho]$. 
      That is, $p_i$ pretends that it is correct in $\alpha_2^r[\rho]$.
\end{itemize}

Similarly, we construct execution $\alpha_1^r[\rho]$ as follows:
\begin{itemize}
\item Start with $\alpha_1^{r-1}[\rho]$. 
\item $ACDE$ receive round $r$ messages which are in transit at the
      end of $\alpha_1^{r-1}[\rho]$ from $ACDE$.
\item Each (correct) $p_i \in ADE$ sends $(r+1,h_i,\rho[r+1,i])$ to
      $ABCDE$, where $h_i$ is $p_i$'s history at the end of
      $\alpha_1^{r-1}[\rho]$ followed by the round $r$ messages it just
      received from $ACDE$.
\item Each (faulty) $p_i \in C$ sends $(r+1,h_i,\rho[r+1,i])$ to
      $ABCDE$, where $h_i$ is $p_i$'s history at the end of
      $\alpha_2^{r-1}[\rho]$ followed by the round $r$ messages in
      transit from $ABCE$ to $p_i$ at the end of $\alpha_2^{r-1}[\rho]$.
      That is, $p_i$ pretends that it is correct in $\alpha_2^r[\rho]$.
\end{itemize}

The next lemma is proved by induction on the rounds. 

\begin{restatable}{lemma}{BCsimilarity}
\label{lem:BC-similarity}
For all $r$, $0 \le r \le K$,
$\alpha_2^r[\rho] \similar{BC} \alpha_0^r[\rho]$ and the same set of messages
are in transit from $ABCE$ to $BC$ at the end of both executions.
\end{restatable}

\begin{proof} 
By induction on $r$.  Note that $BC$ are correct in both executions.

\noindent {\it Basis:}  $r = 0$.
Each $p_i \in BC$ has the same input in $C_2$ as in
$C_0$.  In both executions, all $p_i$ does is wake up and send a round 1
message with the same content.  Thus $\alpha_2^0[\rho] \similar{BC}
\alpha_0^0[\rho]$.  

We just argued that processes in $BC$ send the same
messages in the two executions.  Each $p_i \in AE$ sends the same
message in both executions by definition: 
If $p_i \in A$, then its faulty behavior in $\alpha_0^0[\rho]$
is to send the same message that it sends in $\alpha_2^0[\rho]$
where it is correct.
If $p_i \in E$, then its faulty behavior in $\alpha_2^0[\rho]$
is to send the same message that it sends in $\alpha_0^0[\rho]$
where it is correct.

\noindent {\it Inductive step:} $r \ge 1$.
Assume the lemma is true for $r-1$.

By the inductive hypothesis, each $p_i \in BC$ starts its round $r$
with the same history in both $\alpha_2^r[\rho]$ and $\alpha_0^r[\rho]$.
By the inductive hypothesis, the same set of round $r$ messages are
in transit to $p_i$ from $ABCE$ at the beginning of $p_i$'s round $r$
in both executions.
Since by construction, $p_i$ receives round $r$ messages from $ABCE$
during its round $r$ in both executions, it has the same history
at the end of its round $r$ in both executions.
Thus $\alpha_2^r[\rho] \similar{BC} \alpha_0^r[\rho]$.

The argument just made also implies that each process in $BC$
sends the same round $r+1$ messages to $BC$ in the two executions.
Each $p_i \in A$ sends the same round $r+1$ messages to $BC$ in the two
executions because its faulty behavior in $\alpha_0^r[\rho]$ is to send
the same message that it sends in $\alpha_2^r[\rho]$ where it is correct. 
Each $p_i \in E$ sends the same round $r+1$ messages to $BC$ in the two
executions because its faulty behavior in $\alpha_2^r[\rho]$ is to send
to $BC$ the same message that it sends in $\alpha_0^r[\rho]$ where it
is correct.
\end{proof}
The next lemma is proved analogously to the previous, replacing
$\alpha_0$ with $\alpha_1$,
$BC$ with $AD$, 
$ABCE$ with $ACDE$,
$C_0$ with $C_1$,
``$p_i \in AE$'' with ``$p_i \in CE$'' and
``$p_i \in A$'' with ``$p_i \in C$''.

\begin{restatable}{lemma}{ADsimilarity}
\label{lem:AD-similarity}
For all $r$, $0 \le r \le K$,
$\alpha_2^r[\rho] \similar{AD} \alpha_1^r[\rho]$ and the same set of messages
are in transit from $ACDE$ to $AD$ at the end of both executions.
\end{restatable}

\begin{proof} 
By induction on $r$.  Note that $AD$ are correct in both executions.

\noindent {\it Basis:}  $r = 0$.
Each $p_i \in AD$ has the same input in $C_2$ as in
$C_1$.  In both executions, all $p_i$ does is wake up and send a round 1
message with the same content.  Thus $\alpha_2^0[\rho] \similar{AD}
\alpha_1^0[\rho]$.  

We just argued that processes in $AD$ send the same
messages in the two executions.  Each $p_i \in CE$ sends the same
message in both executions by definition: 
If $p_i \in C$, then its faulty behavior in $\alpha_1^0[\rho]$ is to
send the same message that it sends in $\alpha_2^0[\rho]$ where it is 
correct.
If $p_i \in E$, then its faulty behavior in $\alpha_2^0[\rho]$ is to
send the same message that it sends in $\alpha_1^0[\rho]$ where it is
correct.

\noindent {\it Inductive step:}  $r \ge 1$.
Assume the lemma is true for $r-1$.

By the inductive hypothesis, each $p_i \in AD$ starts its round $r$
with the same history in both $\alpha_2^r[\rho]$ and $\alpha_1^r[\rho]$.
By the inductive hypothesis, the same set of round $r$ messages are
in transit to $p_i$ from $ACDE$ at the beginning of $p_i$'s round $r$
in both executions.
Since by construction, $p_i$ receives round $r$ messages from $ACDE$
during its round $r$ in both executions, it has the same history
at the end of its round $r$ in both executions.
Thus $\alpha_2^r[\rho] \similar{AD} \alpha_1^r[\rho]$.

The argument just made also implies that each process in $AD$
sends the same round $r+1$ messages to $AD$ in the two executions.
Each $p_i \in C$ sends the same round $r+1$ messages to $AD$ in the two
executions because its faulty behavior in $\alpha_1^r[\rho]$ is to send the
same message that it sends in $\alpha_2^r[\rho]$ where it is correct.
Each $p_i \in E$ sends the same round $r+1$ messages to $AD$ in the two
executions because its faulty behavior in $\alpha_2^r[\rho]$ is to send
to $AD$ the same message that it sends in $\alpha_1^r[\rho]$ where it
is correct.
\end{proof}

Let $E_2^K = \{\alpha_2^K[\rho] | \rho$ is any $(K+1) \times 5f$ array
of elements from $\mathcal{R}\}$.

\begin{lemma}
\label{lem:decide by K rounds}
There exists an execution in $E_2^K$ in which all correct processes decide.
\end{lemma}

\begin{proof}
The key point is to show that $E_2^K$ consists of all the $K$-round executions, starting from $C_2$, that can be generated by a fixed adversary $\mathcal{A}$.
The construction of $\alpha_2^K[\rho]$ just given for a fixed collection of random
numbers $\rho$ describes the function $\mathcal{A}$ specifying the order of
process steps and message delays up to the point when all processes have
completed $K$ rounds.
(The execution can be extended infinitely in any arbitrary way as long as
every correct process takes an infinite number of steps and all messages 
to correct processes are eventually
received, but we only care about the execution through $K$ rounds.)
Thus $E_2^K$ consists of all $K$-round executions that have the same scheduling as
that described for $\mathcal{A}$ but with all the possible values for the 
collection $\rho$ of random numbers used by the processes.  

Since the expected round complexity $EX[T]$ is assumed to be at most $K$, we have:
\[ 
K \ge EX[T] 
  \ge EX[\mathcal{A},C_2,T] 
  = \sum_{x\in \mathbb{N}} x \cdot \Pr[\mathcal{A},C_2,T = x] ~.
\]

Thus there is at least one value of $x \le K$ such that $\Pr[\mathcal{A},C_2,T = x]$ is positive.  Let $\rho$ be such that $exec(\mathcal{A},C_2,\rho)$ has round complexity $x$.  The execution $exec(\mathcal{A},C_2,\rho)$ is the same as $\alpha_2^K[\rho]$, and in this execution all correct processes decide by round $K$.
\end{proof}

Let $\alpha_2^K[\rho]$ be an execution whose existence is
guaranteed by Lemma~\ref{lem:decide by K rounds}.

By Lemma~\ref{lem:BC-similarity} with $r = K$, $\alpha_2^K[\rho]
\similar{BC} \alpha_0^K[\rho]$ and thus every process $p_i$ in $BC$
decides by the end of $\alpha_0^K[\rho]$.  By the validity condition,
$p_i$ decides 0 in $\alpha_0^K[\rho]$, and thus it also decides 0 in
$\alpha_2^K[\rho]$.
% shortening for the submission
Similarly, by Lemma~\ref{lem:AD-similarity} with $r = K$, $\alpha_2^K[\rho]
\similar{AD} \alpha_1^K[\rho]$ and thus every process $p_i$ in $AD$
decides by the end of $\alpha_1^K[\rho]$.  By the validity condition,
$p_i$ decides 1 in $\alpha_1^K[\rho]$, and thus it also decides 1 in
$\alpha_2^K[\rho]$.

But $\alpha_2^K[\rho]$ violates the agreement condition, a contradiction.
\end{proof}

We have the following corollary for deterministic algorithms:

\begin{corollary}
\label{cor:unbounded canonical rounds - det}
For any deterministic canonical-round algorithm that solves the nontrivial convergence problem with $n \le 5f$,
%and $f > 0$, 
the worst-case number of rounds until all correct processes decide is greater than $K$, for any integer $K \in \mathbb{N}$. 
\end{corollary}

\begin{proof}
The proof is essentially the same as that of Theorem~\ref{theorem:unbounded canonical rounds} without the need for $\rho$ (the random choices). 
The key difference is that finding an execution in $E_2^K$ in which all correct processes decide (cf.~Lemma~\ref{lem:decide by K rounds}) is much simpler.
$E_2^K$ contains a single execution and by the
contradiction assumption that the worst-case round complexity is at most $K$, every correct process decides in it.
\end{proof}

\subsection{Impossibility with Communication-Closed Canonical Rounds}
\label{subsec:lb-cc}

We model algorithms with communication-closed canonical rounds following~\cite{Coan1988}: processes proceed in canonical rounds, but round $r$ messages that arrive in a later round than $r$ are discarded. 
As before, processes keep sending messages after they decide, and never halt.

Theorem~\ref{theorem:unbounded canonical rounds} does not rule out the possibility that the expected number of rounds for each adversary is finite, but there is no upper bound that holds for all adversaries.
Under the additional constraint of communication-closed behavior, we can prove something stronger, that there is an adversary that can prevent decision in all its executions.

\begin{restatable}{theorem}{CCimpossibility}
\label{theorem:communication closed impossibility}
For any randomized communication-closed canonical-round algorithm that solves the nontrivial convergence
problem with $n \le 5f$,
%and $f > 0$, 
there exists an adversary $\mathcal{A}$ and initial configuration $I$ such that 
$\Pr[\mathcal{A},I,P] = 0$, where $P$ is the predicate that
all the correct processes eventually decide.
\end{restatable}

The proof is very similar to that of 
Theorem~\ref{theorem:unbounded canonical rounds}
and also works by contradiction.
We define a similar adversary $\mathcal{A}$ to the one in the previous proof.  
The key differences are:
\begin{itemize}
    \item The executions constructed are extended to infinitely many rounds, not just $K$.
    \item The processes that were late to wake up in the previous proof wake up initially but have their messages delivered one round late.
    \item All the messages that are delayed until after round $K$ in the original executions are delivered one round late.
\end{itemize}
The complete proof is given next, with the differences marked in red.

\begin{proof}  Assume towards a contradiction that there exists a 
\change{communication-closed} canonical-round algorithm for nontrivial convergence with $n \le 5f$ 
\change{such that for every adversary $\mathcal{A}$ and initial configuration $I$, $\Pr[\mathcal{A},I,P] > 0$, where $P$ is the predicate that all correct processes eventually decide.}
For convenience, denote the specific values $x_0$, $x_1$, $d_0$, and $d_1$
in the definition of nontrivial convergence by $0$, $1$, $0$, and $1$
respectively.

For simplicity, we assume $n = 5f$, and divide the processes into five
disjoint sets of $f$ processes each:  $A$, $B$, $C$, $D$, and $E$.
We abuse notation and, for any two sets $X$ and $Y$, denote $X \cup Y$
by $XY$.

We consider the following initial configurations of the algorithm:
\begin{itemize}
\item Denote by $C_0$ the initial configuration in which processes in
$BCDE$ are correct and have input 0, while processes in $A$ are Byzantine.
\item Denote by $C_1$ the initial configuration in which processes in
$ABDE$ are correct and have input 1, while processes in $C$ are Byzantine.
\item Denote by $C_2$ the initial configuration in which processes in
$ABCD$ are correct, those in $BC$ have input 0 and those in $AD$ have input 1, 
while processes in $E$ are Byzantine.
\end{itemize}
For each $C_j$, let $C_j[i]$ denote $p_i$'s input in $C_j$.

We next define a set of executions of the algorithm that begin with $C_2$.
For each of these executions, we define a related execution that
begins with $C_0$ and another one that begins with $C_1$.
All these executions are recursively constructed round by round,
with the behavior of the faulty processes in each one being dependent
on the behavior of some correct processes in others.
(See Figures~\ref{figure:three executions} and~\change{\ref{fig:communication_closed_combined}}.)

The construction uses a \change{two-dimensional} array of elements from $\mathcal{R}$,
denoted $\rho$, where $\mathcal{R}$ is the set from which the random numbers are chosen.
For each $r$ and $i$, $\rho[r,i]$ contains the random choice made by process $p_i$ in round $r$ of certain executions.

\begin{figure}

\centering
\includegraphics[width=0.9\textwidth]{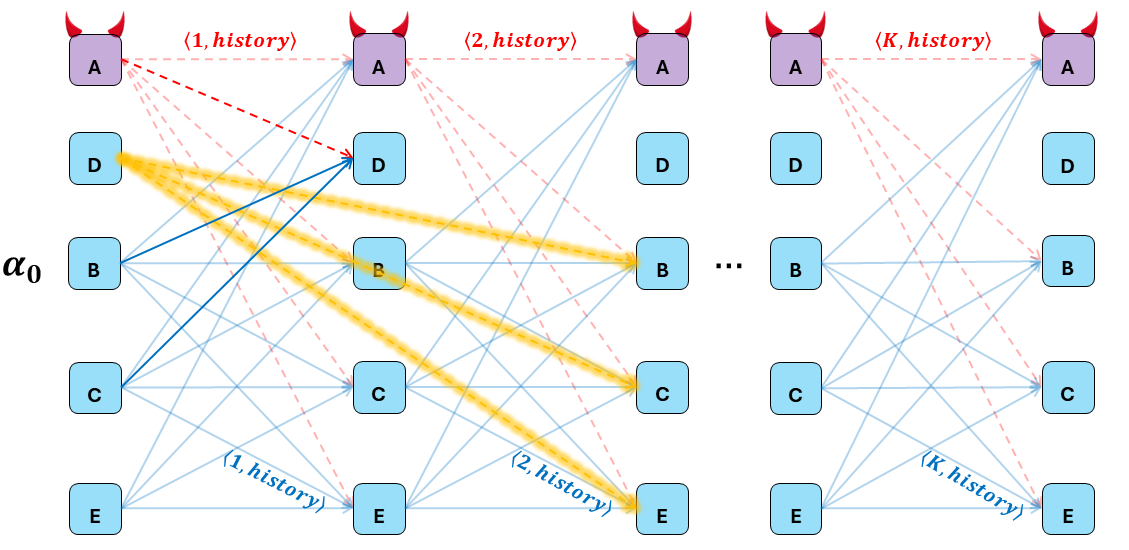}
\vspace{0.5cm} % Space between illustrations
    
\includegraphics[width=0.9\textwidth]{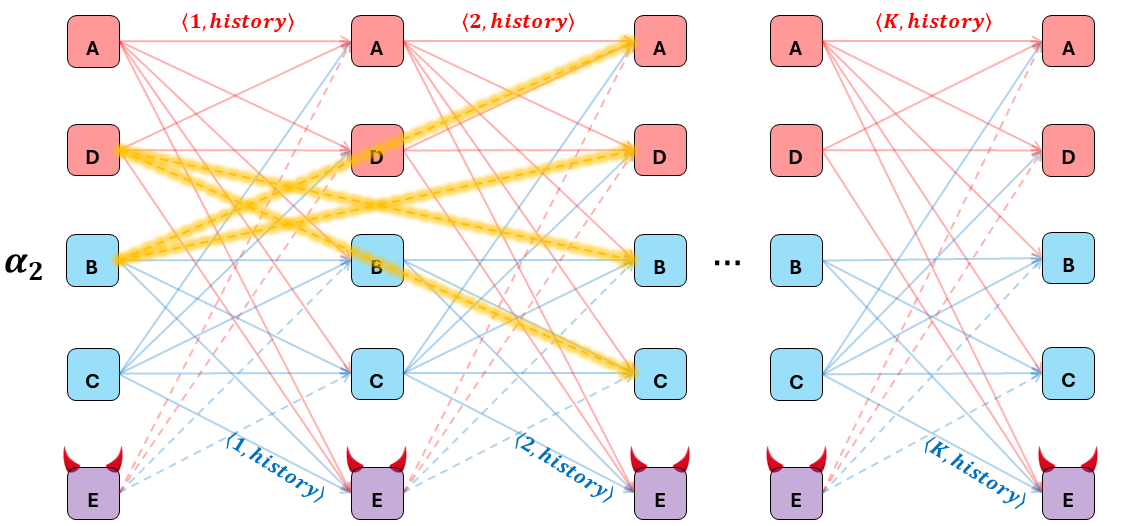}
\vspace{0.5cm} % Space between illustrations
    
\includegraphics[width=0.9\textwidth]{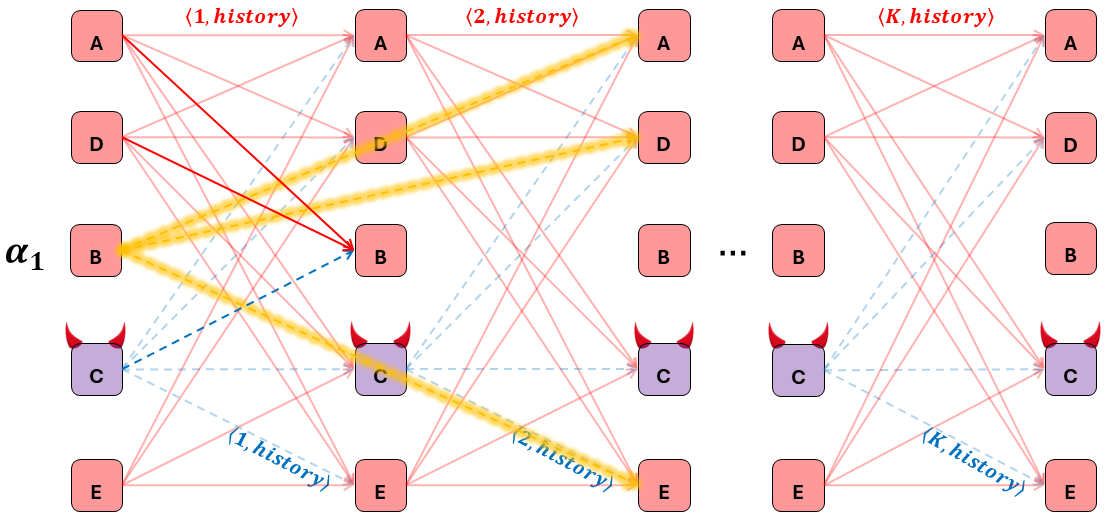}
    
\caption{Illustration of the executions used in Theorem~\ref{theorem:communication closed impossibility}.
In each execution, delayed messages are colored yellow.}
\label{fig:communication_closed_combined}

\end{figure}

\paragraph*{Basis step of the construction:}

Define execution $\alpha_2^0[\rho]$, in which processes wake up and send
their round 1 messages, as follows:
\begin{itemize}
\item Start with $C_2$.
\item $ABCDE$ wake up.
\item Each (correct) $p_i \in ABCD$ sends $(1,C_2[i],\rho[1,i])$ 
      to $ABCDE$.
\item Each (faulty) $p_i \in E$ sends $(1,C_0[i],\rho[1,i])$ to $BC$
      and $(1,C_1[i],\rho[1,i])$ to $AD$.  That is, $p_i$
      pretends to $BC$ that its input is 0
      and pretends to $AD$ that its input is 1.
\end{itemize}

Next we construct execution $\alpha_0^0[\rho]$ as follows:
\begin{itemize}
\item Start with $C_0$.
\item $ABC\change{D}E$ wake up.
\item Each (correct) $p_i \in BC\change{D}E$ sends $(1,C_0[i],\rho[1,i])$ to $ABCDE$.
\item Each (faulty) $p_i \in A$ sends $(1,C_2[i],\rho[1,i])$ to $ABCDE$.
      That is, $p_i$ pretends that it is correct in $\alpha_2^0[\rho]$.
\end{itemize}

Similarly, we construct execution $\alpha_1^0[\rho]$ as follows:
\begin{itemize}
\item Start with $C_1$.
\item $A\change{B}CDE$ wake up.  
\item Each (correct) $p_i \in A\change{B}DE$ sends $(1,C_1[i],\rho[1,i])$ to $ABCDE$.
\item Each (faulty) $p_i \in C$ sends $(1,C_2[i],\rho[1,i])$ to $ABCDE$.
      That is, $p_i$ pretends that it is correct in $\alpha_2^0[\rho]$.
\end{itemize}

\paragraph*{Inductive step of the construction:}

For \change{$r \ge 1$}, assume that executions $\alpha_i^{r-1}[\rho]$,
$i = 0,1,2$, have already been defined.

Define execution $\alpha_2^r[\rho]$, in which processes receive round $r$
messages and send round $r+1$ messages, as follows:
\begin{itemize}
\item Start with $\alpha_2^{r-1}[\rho]$.
\item \change{If $r > 1$, then $ABCDE$ receive all round $r-1$ messages which are in transit
      at the end of $\alpha_2^{r-1}[\rho]$; these messages are ignored.}
\item $AD$ receive the round $r$ messages which are in transit at the
      end of $\alpha_2^{r-1}[\rho]$ from $ACDE$.
\item $BC$ receive the round $r$ messages which are in transit at the
      end of $\alpha_2^{r-1}[\rho]$ from $ABCE$.
\item Each (correct) $p_i \in AD$ sends $(r+1,h_i,\rho[r+1,i])$ to $ABCDE$, 
      where $h_i$ is $p_i$'s history at the end of
      $\alpha_2^{r-1}[\rho]$ followed by the round $r$ messages just 
      received from $ACDE$.
\item Each (correct) $p_i \in BC$ sends $(r+1,h_i,\rho[r+1,i])$ to $ABCDE$, 
      where $h_i$ is $p_i$'s history at the end of
      $\alpha_2^{r-1}[\rho]$ followed by the round $r$ messages just
      received from $ABCE$.
\item Each (faulty) $p_i \in E$ sends $(r+1,h_i^0,\rho[r+1,i])$ 
      to $BC$, where $h_i^0$ is the state of $p_i$
      at the end of $\alpha_0^{r-1}[\rho]$ followed by the round $r$ messages
      in transit from $ABCE$ to $p_i$ at the end of $\alpha_0^{r-1}[\rho]$.
\item Each (faulty) $p_i \in E$ sends $(r+1,h_i^1,\rho[r+1,i])$ 
      to $AD$, where $h_i^1$ is the state of $p_i$
      at the end of $\alpha_1^{r-1}[\rho]$ followed by the round $r$ messages
      in transit from $ACDE$ to $p_i$ at the end of $\alpha_1^{r-1}[\rho]$.
\end{itemize}
That is, each faulty process in $E$ pretends to $BC$ that it is correct
in $\alpha_0^r[\rho]$ and pretends to $AD$ that it is correct
in $\alpha_1^r[\rho]$.

\noindent
{\bf Observation:}
Every process completes $r$ canonical rounds in $\alpha_2^r[\rho]$.

Next we construct execution $\alpha_0^r[\rho]$ as follows:
\begin{itemize}
\item Start with $\alpha_0^{r-1}[\rho]$.
\item \change{If $r > 1$, then $ABCDE$ receive all round $r-1$ messages which are in transit
      at the end of $\alpha_2^{r-1}[\rho]$; these messages are ignored.}
\item $ABCE$ receive round $r$ messages which are in transit at the end of
      $\alpha_0^{r-1}[\rho]$ from $ABCE$.
\item Each (correct) $p_i \in BCE$ sends $(r+1,h_i,\rho[r+1,i])$ to
      $ABCDE$, where $h_i$ is $p_i$'s history at the end of $\alpha_0^{r-1}[\rho]$
      followed by the round $r$ messages it just received from $ABCE$.
\item Each (faulty) $p_i \in A$ sends $(r+1,h_i,\rho[r+1,i])$ to $ABCDE$,
      where $h_i$ is $p_i$'s history at the end of $\alpha_2^{r-1}[\rho]$
      followed by the round $r$ messages in transit from $ACDE$
      to $p_i$ at the end of $\alpha_2^{r-1}[\rho]$. 
      That is, $p_i$ pretends that it is correct in $\alpha_2^r[\rho]$.
\end{itemize}

Similarly, we construct execution $\alpha_1^r[\rho]$ as follows:
\begin{itemize}
\item Start with $\alpha_1^{r-1}[\rho]$. 
\item \change{If $r > 1$, then $ABCDE$ receive all round $r-1$ messages which are in transit
      at the end of $\alpha_2^{r-1}[\rho]$; these messages are ignored.}
\item $ACDE$ receive round $r$ messages which are in transit at the
      end of $\alpha_1^{r-1}[\rho]$ from $ACDE$.
\item Each (correct) $p_i \in ADE$ sends $(r+1,h_i,\rho[r+1,i])$ to
      $ABCDE$, where $h_i$ is $p_i$'s history at the end of
      $\alpha_1^{r-1}[\rho]$ followed by the round $r$ messages it just
      received from $ACDE$.
\item Each (faulty) $p_i \in C$ sends $(r+1,h_i,\rho[r+1,i])$ to
      $ABCDE$, where $h_i$ is $p_i$'s history at the end of
      $\alpha_2^{r-1}[\rho]$ followed by the round $r$ messages in
      transit from $ABCE$ to $p_i$ at the end of $\alpha_2^{r-1}[\rho]$.
      That is, $p_i$ pretends that it is correct in $\alpha_2^r[\rho]$.
\end{itemize}

\begin{lemma}
\label{lem:BC-similarity-CC}
For all \change{$r \ge 0$},
$\alpha_2^r[\rho] \similar{BC} \alpha_0^r[\rho]$ and the same set of messages
are in transit from $ABCE$ to $BC$ at the end of both executions.
\end{lemma}

\begin{proof} 
By induction on $r$.  Note that $BC$ are correct in both executions.

\noindent {\it Basis:}  $r = 0$.
Each $p_i \in BC$ has the same input in $C_2$ as in
$C_0$.  In both executions, all $p_i$ does is wake up and send a round 1
message with the same content.  Thus $\alpha_2^0[\rho] \similar{BC}
\alpha_0^0[\rho]$.  

We just argued that processes in $BC$ send the same
messages in the two executions.  Each $p_i \in AE$ sends the same
message in both executions by definition: 
If $p_i \in A$, then its faulty behavior in $\alpha_0^0[\rho]$
is to send the same message that it sends in $\alpha_2^0[\rho]$
where it is correct.
If $p_i \in E$, then its faulty behavior in $\alpha_2^0[\rho]$
is to send the same message that it sends in $\alpha_0^0[\rho]$
where it is correct.

\noindent {\it Inductive step:} $r \ge 1$.
Assume the lemma is true for $r-1$.

By the inductive hypothesis, each $p_i \in BC$ starts its round $r$
with the same history in both $\alpha_2^r[\rho]$ and $\alpha_0^r[\rho]$.
By the inductive hypothesis, the same set of round $r$ messages are
in transit to $p_i$ from $ABCE$ at the beginning of $p_i$'s round $r$
in both executions.
Since by construction, $p_i$ receives round $r$ messages from $ABCE$
during its round $r$ in both executions, it has the same history
at the end of its round $r$ in both executions.
Thus $\alpha_2^r[\rho] \similar{BC} \alpha_0^r[\rho]$.

The argument just made also implies that each process in $BC$
sends the same round $r+1$ messages to $BC$ in the two executions.
Each $p_i \in A$ sends the same round $r+1$ messages to $BC$ in the two
executions because its faulty behavior in $\alpha_0^r[\rho]$ is to send
the same message that it sends in $\alpha_2^r[\rho]$ where it is correct. 
Each $p_i \in E$ sends the same round $r+1$ messages to $BC$ in the two
executions because its faulty behavior in $\alpha_2^r[\rho]$ is to send
to $BC$ the same message that it sends in $\alpha_0^r[\rho]$ where it
is correct.
\end{proof}

The next lemma is proved analogously to the previous, replacing
$\alpha_0$ with $\alpha_1$,
$BC$ with $AD$, 
$ABCE$ with $ACDE$,
$C_0$ with $C_1$,
``$p_i \in AE$'' with ``$p_i \in CE$'' and
``$p_i \in A$'' with ``$p_i \in C$''.

\begin{lemma}
\label{lem:AD-similarity-CC}
For all \change{$r \ge 0$},
$\alpha_2^r[\rho] \similar{AD} \alpha_1^r[\rho]$ and the same set of messages
are in transit from $ACDE$ to $AD$ at the end of both executions.
\end{lemma}

\begin{proof} 
By induction on $r$.  Note that $AD$ are correct in both executions.

\noindent {\it Basis:}  $r = 0$.
Each $p_i \in AD$ has the same input in $C_2$ as in
$C_1$.  In both executions, all $p_i$ does is wake up and send a round 1
message with the same content.  Thus $\alpha_2^0[\rho] \similar{AD}
\alpha_1^0[\rho]$.  

We just argued that processes in $AD$ send the same
messages in the two executions.  Each $p_i \in CE$ sends the same
message in both executions by definition: 
If $p_i \in C$, then its faulty behavior in $\alpha_1^0[\rho]$ is to
send the same message that it sends in $\alpha_2^0[\rho]$ where it is 
correct.
If $p_i \in E$, then its faulty behavior in $\alpha_2^0[\rho]$ is to
send the same message that it sends in $\alpha_1^0[\rho]$ where it is
correct.

\noindent {\it Inductive step:}  $r \ge 1$.
Assume the lemma is true for $r-1$.

By the inductive hypothesis, each $p_i \in AD$ starts its round $r$
with the same history in both $\alpha_2^r[\rho]$ and $\alpha_1^r[\rho]$.
By the inductive hypothesis, the same set of round $r$ messages are
in transit to $p_i$ from $ACDE$ at the beginning of $p_i$'s round $r$
in both executions.
Since by construction, $p_i$ receives round $r$ messages from $ACDE$
during its round $r$ in both executions, it has the same history
at the end of its round $r$ in both executions.
Thus $\alpha_2^r[\rho] \similar{AD} \alpha_1^r[\rho]$.

The argument just made also implies that each process in $AD$
sends the same round $r+1$ messages to $AD$ in the two executions.
Each $p_i \in C$ sends the same round $r+1$ messages to $AD$ in the two
executions because its faulty behavior in $\alpha_1^r[\rho]$ is to send the
same message that it sends in $\alpha_2^r[\rho]$ where it is correct.
Each $p_i \in E$ sends the same round $r+1$ messages to $AD$ in the two
executions because its faulty behavior in $\alpha_2^r[\rho]$ is to send
to $AD$ the same message that it sends in $\alpha_1^r[\rho]$ where it
is correct.
\end{proof}

\change{
The construction of $\alpha_2^r[\rho]$ just given for a fixed $\rho$ and all $r$ describes a function $\mathcal{A}$ that, starting from $C_2$, specifies the order of process steps, message delays, and the contents of messages from faulty processes. Since every process takes infinitely many steps as $r$ goes to infinity and every message is received, $\mathcal{A}$ is an adversary.
By the contradiction assumption, $\Pr[\mathcal{A},C_2,P] > 0$, which implies there exists $\rho$ such that in $exec(\mathcal{A},C_2,\rho)$, $P$ is true (all correct processes eventually decide).
Let $\rho$ be one such collection of random numbers and let $K$ be the round by which the decisions are made.
}

By Lemma~\change{\ref{lem:BC-similarity-CC}} with $r = K$, $\alpha_2^K[\rho]
\similar{BC} \alpha_0^K[\rho]$ and thus every process $p_i$ in $BC$
decides by the end of $\alpha_0^K[\rho]$.  By the validity condition,
$p_i$ decides 0 in $\alpha_0^K[\rho]$, and thus it also decides 0 in
$\alpha_2^K[\rho]$.

Similarly, by Lemma~\change{\ref{lem:AD-similarity-CC}} with $r = K$, $\alpha_2^K[\rho]
\similar{AD} \alpha_1^K[\rho]$ and thus every process $p_i$ in $AD$
decides by the end of $\alpha_1^K[\rho]$.  By the validity condition,
$p_i$ decides 1 in $\alpha_1^K[\rho]$, and thus it also decides 1 in
$\alpha_2^K[\rho]$.

But $\alpha_2^K[\rho]$ violates the agreement condition, a contradiction.
\end{proof}

We have the following corollary for deterministic algorithms:

\begin{corollary}
\label{cor:imposs communication closed - det}
It is impossible for any deterministic communication-closed canonical-round algorithm to solve the nontrivial convergence
problem with $n \le 5f$.
%and $f > 0$. 
\end{corollary}

\begin{proof}
Assume in contradiction there is such an algorithm.
Consider the adversary $\mathcal{A}$ and initial configuration $I$ from the statement of Theorem~\ref{theorem:communication closed impossibility}.  
When applied to the deterministic algorithm, there is only one resulting execution and it does not satisfy the predicate that all the correct processes eventually decide.
\end{proof}

\subsection{Immediate Consequences}
\label{subsec:immediate}

We now examine the consequences of
Theorems~\ref{theorem:unbounded canonical rounds} and~\ref{theorem:communication closed impossibility} and their deterministic corollaries for several fundamental tasks in asynchronous Byzantine-tolerant systems.
We focus on deterministic solutions for all the tasks considered, except for consensus which has only randomized solutions.

Recall from Section~\ref{subsec:canonical-round-defs} that {\bf consensus}, {\bf $\epsilon$-approximate agreement} (where the input set contains two values more than $\epsilon$ apart), {\bf approximate agreement on a graph} (with two vertices at distance 2), and {\bf $R$-connected consensus} (which generalizes {\bf crusader agreement} and {\bf gradecast}) are all examples of nontrivial convergence problems.

Thus Theorem~\ref{theorem:unbounded canonical rounds} implies that for each of these problems, the expected round complexity of any \emph{randomized} canonical-round algorithm with $n \le 5f$ is unbounded.  Furthermore, by Theorem~\ref{theorem:communication closed impossibility} such an algorithm cannot be communication closed.

In addition, Corollary~\ref{cor:unbounded canonical rounds - det} implies that any \emph{deterministic} canonical-round algorithm for each of these problems with $n \le 5f$ has an execution in which some correct process does not decide by round $K$, for any $K \in \mathbb{N}$.
Also, by Corollary~\ref{cor:imposs communication closed - det}, such an algorithm cannot be communication closed.

\section{Canonical-Round Lower Bounds via Reductions}
\label{sec:reductions}

In this section we use reductions to apply our lower bounds from Section~\ref{sec:lb} beyond the realm of nontrivial convergence problems.
We show that crusader agreement (i.e., connected consensus with $R = 1$) can be used in a reduction to show that reliable broadcast is subject to the same limitations.
Crusader agreement can also be used to solve a problem called \emph{gather} with no additional communication, and thus the limitations on crusader agreement also apply to gather.

\subsection{Reliable and Consistent Broadcast}
\label{app:bcast reductions}

\emph{Reliable broadcast}~\cite{Bracha1987} is defined 
with one of the $n$ processes, $s$,
designated as the \emph{sender}.
The sender has an input value $v$, 
and it calls \emph{r-broadcast}$(v,s)$, where the argument $s$ indicates that $s$ is the sender in this instantiation. 
Processes other than $p$ call \emph{r-broadcast}$(-,s)$, where the argument $-$ indicates that the invoker is not the sender in this instantiation.
Processes
may terminate with \emph{r-accept}($w$,$s$),
with the following properties:
\begin{description}
\item[Agreement:] All correct processes that accept a value from sender $s$, accept the same value.
\item[Validity:] If the sender $s$ is correct then eventually all correct processes accept $s$’s input.
\item[Totality (relay):] If some correct process accepts a value from sender $s$ 
then eventually all correct processes accept a value from sender $s$.
\end{description}

A weaker variant, \textit{consistent broadcast}~\cite{CachinKPS2001}, 
is very similar
but abandons the totality property; it requires only that correct processes do not accept conflicting values from the same sender, and that if the sender is correct then eventually all correct processes accept its message.
The interface to consistent broadcast is denoted c-broadcast and c-accept. 

We use a reduction to show that consistent broadcast has no bounded-round canonical round algorithm and no communication-closed canonical round algorithm when $n \le 5f$.
Since every algorithm for reliable broadcast is also a solution to consistent broadcast, the impossibility results for consistent broadcast immediately extend to reliable broadcast.
Consider Algorithm~\ref{alg:RB to CA} for crusader agreement, assuming $n > 4f$, which uses $n$ concurrent instantiations of consistent broadcast. 
Next we show that this algorithm is correct.

\begin{algorithm}[bt]
  \caption{Crusader agreement using consistent broadcast ($n>4f$); code for process $p_i$}
  \label{alg:RB to CA}
  \small
  \begin{algorithmic}[1]
    \State $W_i \gets \emptyset$ \Comment{$W_i$ is a multiset of values}
    \State c-broadcast($v_i$, $p_i$) \Comment{invoke c-broadcast as sender; $v_i$ is $p_i$'s input}
    \State c-broadcast($-$,$p_j$) for all $j \ne i$ 
    \Comment{invoke $n-1$ c-broadcasts as non-sender}
    \State \textbf{repeat}
        \Statex \quad \textbf{upon} c-accept($v$, $p_j$): 
        $W_i \gets W_i \cup\{v\}$
    \State \textbf{until} $|W_i| = n - f$ 
    \State \textbf{if} $W_i$ contains $|W_i|-f$ copies of $v$ \textbf{then} \textbf{decide} $v$
           {else} \textbf{decide} $\bot$ \textbf{endif}
  \end{algorithmic}
  \label{alg:ca-using-rbcast}
\end{algorithm}

\begin{theorem}
\label{thm:cc-from-cb:proof}
Algorithm~\ref{alg:ca-using-rbcast} solves crusader agreement for $n>4f$.
\end{theorem}

\begin{proof}
To argue \emph{agreement} for crusader agreement, 
assume for contradiction that 
a correct process $p_i$ has $|W_i|-f$ copies of $v$ in $W_i$, 
and a correct process $p_j$ has $|W_j|-f$ copies of $w$ in $W_j$.
Then, since $|W_i|, |W_j| \geq n-f$ and since $n > 4f$, 
$p_i$ has c-accepted $v$ from some process, 
while $p_j$ has c-accepted $w$ from the same process, 
in contradiction to the agreement property of consistent broadcast. 

To argue \emph{validity} for crusader agreement, it is clear that when all correct processes start with $v$, each correct process will c-accept at least $|W_i|-f$ copies of $v$ and thus decide $v$.

To show \emph{termination} for crusader agreement, note that Algorithm~\ref{alg:ca-using-rbcast}
simply waits for the termination of $n-f$ 
out of $n$ concurrent invocations of consistent broadcast.
\end{proof}

Since Algorithm~\ref{alg:ca-using-rbcast} adds no communication beyond that of the $n$ copies of consistent broadcast that run in parallel, if the consistent broadcast is a (communication-closed) canonical round algorithm, then so is Algorithm~\ref{alg:ca-using-rbcast}.
This contradicts the results of Section~\ref{subsec:immediate}, and implies the following corollary, which justifies why Algorithm~\ref{alg:bracha} \cite{Bracha1987} is not in canonical round form:

\begin{corollary}
\label{cor:reliable broadcast}
In the asynchronous model with $n \le 5f$, any canonical round algorithm for reliable or consistent broadcast has an execution in which some correct process does not terminate by round $K$, for any integer $K \geq 1$.
Furthermore, such an algorithm is not communication-closed.
\end{corollary}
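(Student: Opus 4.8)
The plan is to argue by contradiction, reusing the reduction already established in Algorithm~\ref{alg:ca-using-rbcast}. Suppose that for some fixed integer $S$ there were a canonical round algorithm for reliable broadcast in which, in every execution and every sender instance, each correct process that accepts does so (issues its \emph{r-accept}) by the end of round $S$. I would then build from it a \emph{single} canonical round algorithm for crusader agreement that always decides within a bounded number of rounds, contradicting Corollary~\ref{cor:crusader agreement}. Since Algorithm~\ref{alg:ca-using-rbcast} is shown correct only for $n > 4f$, the contradiction is obtained throughout the band $4f < n \le 5f$, which is the critical part of the resilience regime covered by the claim; the upper endpoint $n \le 5f$ is inherited directly from Corollary~\ref{cor:crusader agreement}.

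First I would make the composition explicit. Each process runs the $n$ reliable-broadcast instances concurrently, tagging every message with the instance it belongs to, and I bundle, for each round $r$, the round-$r$ messages of all $n$ instances into one combined round-$r$ message. The composite then obeys the template of Algorithm~\ref{fig:canonical-rounds-template}: a process advances its combined round upon receiving $n-f$ combined round-$r$ messages, and because each instance is itself canonical, after $S$ combined rounds every instance has completed $S$ of its own rounds. Next I would verify that the crusader wrapper adds no rounds. At most $f$ of the $n$ senders are faulty, so at least $n-f$ instances have correct senders; by validity of reliable broadcast every correct process accepts the sender's input in each such instance, and by the assumed bound this happens by round $S$. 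Hence by the end of combined round $S$ every correct process has collected at least $n-f$ accepted values in $W_i$, the \textbf{until} condition of Algorithm~\ref{alg:ca-using-rbcast} is met, and the process decides. The result is a canonical round crusader-agreement algorithm that always decides by round $S$, which contradicts Corollary~\ref{cor:crusader agreement}. Therefore no such uniform bound $S$ exists, which is exactly the assertion of the corollary: for every $K$ there is an execution and a correct process that has not accepted (terminated) by round $K$.

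I expect the main obstacle to be the composition step itself---justifying that bundling $n$ parallel canonical-round instances together with the local crusader computation still fits the canonical round template and, crucially, introduces no extra rounds. The care lies in checking that advancing the combined round only when $n-f$ combined messages arrive is consistent with each instance's own $n-f$ threshold (they advance in lockstep because the sub-messages travel inside the same combined message) and that the decision in Algorithm~\ref{alg:ca-using-rbcast} fires in the very round in which the underlying accepts complete rather than in a subsequent one. Finally, running the identical argument against Corollary~\ref{cor:communication-closed impossibility simple} in place of Corollary~\ref{cor:crusader agreement} yields the companion statement that reliable broadcast admits no communication-closed canonical round algorithm in this regime.
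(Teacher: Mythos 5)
Your proposal is correct and follows essentially the same route as the paper: the paper also proves this corollary by showing Algorithm~\ref{alg:ca-using-rbcast} (crusader agreement from $n$ parallel reliable broadcasts, correct for $n>4f$) adds no rounds beyond the broadcast instances, so a bounded-round canonical-round reliable broadcast would contradict Corollary~\ref{cor:crusader agreement}. Your explicit treatment of the bundled-round composition and your observation that the contradiction lives in the band $4f < n \le 5f$ are just more careful spellings-out of steps the paper asserts briefly (and acknowledges in its discussion section, where the reduction is noted to hold for $n > 4f$).
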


\subsection{Gather}

\emph{Gather}'s specification (e.g.,~\cite{SternA2021}) can be viewed as an extension of reliable
broadcast in which \emph{all processes} broadcast their value, and accept values from a large set of processes.
Beyond properties inherited from reliable
broadcast, most notably, that if two correct processes accept a value from another process, it is the same value, gather also ensures that there is a \emph{common core} of $n-f$ values that are accepted by all correct processes. 
In more detail, \emph{gather} is called by process $p_i$ with an input $x_i$ and it returns a set $S_i$ of distinct (process id, value) pairs satisfying the following properties:
\begin{description}
\item[Agreement:] For any $k$, if $p_i$ and $p_j$ are correct and $(k,x) \in S_i$ and $(k,x') \in S_j$, then $x = x'$.
\item[Validity:] For every pair of correct processes $p_i$ and $p_j$, if $(j,x) \in S_i$, then $x=x_j$.
\item[Termination:]  All correct processes eventually return.
\item[Common core:] There is a set $S^C$ of size $n-f$ such that $S^C \subseteq S_i$, for every correct process $p_i$.
\end{description}

We use a reduction from connected consensus with $R = 1$ (i.e., crusader agreement) to gather to show that gather has no bounded-round canonical round algorithm and no communication-closed canonical round algorithm when $n \le 5f$.
We in fact present a connected consensus algorithm for
any $R \ge 1$ that uses subroutines for gather and, if $R \ge 2$,
for approximate agreement.
The algorithm assumes $n > 3f$ and thus relies on subroutines
that also work when $n > 3f$.
See Algorithm~\ref{alg:cc 3f} for the pseudocode.
Here we focus on Lines 1--3 for the $R = 1$ case;
note that this part of the algorithm makes one call to a gather subroutine and does no additional communication.

For completeness, we present in Appendix~\ref{app:gather} 
algorithms for non-binding gather and for binding gather, which use reliable broadcast.

\begin{algorithm}[tb]
\caption{Modular Byzantine-tolerant $R$-connected consensus for arbitrary $R \ge 1$ ($n > 3f$); code for process $p_i$.
Uses % optimal-resilience 
subroutines G for gather and AA for approximate agreement.}

\label{alg:cc 3f}

\begin{algorithmic}[1]{}
\small
\State $S_i \leftarrow$ G$(x_i)$ \Comment{call gather subroutine; $x_i$ is $p_i$'s input}
       \label{line:cc 3f:first gather}
\State \textbf{if} some $v \in V$ appears $|S_i| - f$ times in $S_i$
       \textbf{then} $v_i \leftarrow v$; $g_i \leftarrow 1$
       \textbf{else} $v_i \leftarrow \bot$; $g_i \leftarrow 0$
       \textbf{endif}
       \label{line:cc 3f:initial estimates}
\State \textbf{if} $R = 1$ 
       \textbf{then} decide $(v_i,g_i)$
       \textbf{endif}
       \Comment{crusader agreement case; return after deciding}
\Statex \Comment{rest of code is for $R \ge 2$ case}
\State $T_i \leftarrow$ G$(v_i)$ 
    \Comment{call gather subroutine again, now with only one non-$\bot$ value}
       \label{line:cc 3f:second gather}
\State \textbf{if} some $v \in V$ appears $f+1$ times in $T_i$
\Statex \quad\quad \textbf{then} $v_i \leftarrow v$
       \textbf{else} $v_i \leftarrow \bot$ 
       \textbf{endif}
       \Comment{choose unique branch}
       \label{line:cc 3f:choose branch}
\State \textbf{if} some $v \in V$ appears $|T_i| - f$ times in $T_i$
\Statex \quad\quad \textbf{then} $r_i \leftarrow R$
       \textbf{else} $r_i \leftarrow 0$
       \textbf{endif}
       \Comment{choose AA input}
       \label{line:cc 3f:choose AA input}
\State $g_i \leftarrow \lceil$AA$(1,r_i)\rceil$
       \Comment{choose grade via AA with $\epsilon = 1$}
       \label{line:cc 3f:call AA}
\State \textbf{if} $g_i = 0$
       \textbf{then} decide $(\bot,0)$
       \textbf{else} decide $(v_i,g_i)$
       \textbf{endif}
       \Comment{will show $v_i \ne \bot$ when $g_i > 0$}
\end{algorithmic}
\end{algorithm}

To prove the correctness of Algorithm~\ref{alg:cc 3f}, we start with three facts about the outputs of any gather primitive. %, including G.  
Let $S_i$ be the output of a correct process $p_i$ and $S^C$ be the guaranteed common core.

\begin{lemma}
\label{lem:cc 3f:gather result}
If a value $v \in V$ appears at least $|S_i| - f$ times in $S_i$, then 
$v$ appears at least $|S^C| - f$ times in $S^C$.
\end{lemma}

\begin{proof}
For any set $S$ of (process-id, value) pairs and any value $v$, 
let $\#(S,v)$ be the number of pairs in $S$ containing $v$.
By the hypothesis of the lemma, $\#(S_i,v) \geq |S_i| - f$.
Let $T_i = S_i \setminus S^C$ be the subset of $S_i$ that is not in the 
common core $S^C$; then $|T_i| = |S_i| - |S^C|$.
We have as needed:
\[
\#(S^C,v) = \#(S_i,v) - \#(T_i,v) \geq |S_i| - f - (|S_i| - |S^C|) = |S^C| -f .
\qedhere \] 
\end{proof}

Since $|S^C| = n-f$ and $n > 3f$, we also have:

\begin{proposition}
\label{prop:cc 3f:gather core unique}
At most one value can appear $|S^C| - f$ times in $S^C$.
\end{proposition}

\begin{proposition}
\label{prop:cc 3f:gather core valid}
If some $v \in V$ appears $|S_i| - f$ times in $S_i$, 
then $v$ is the input of some correct process. 
\end{proposition}

\begin{proof}
Since $|S_i| \ge n-f$ and $n > 3f$, at least one of the occurrences of $v$ in $S_i$ is for a correct process $p_j$.  By the validity property of gather, $v$ is $p_j$'s input.
\end{proof}

\begin{restatable}{theorem}{thmR}
\label{thm:cc 3f:correct when R is 1}
Algorithm~\ref{alg:cc 3f} solves connected consensus for $R = 1$.
\end{restatable}

\begin{proof}
\emph{Termination} holds by the assumed correctness, including termination, of the G subroutine.

\emph{Validity:}  Suppose correct process $p_i$ decides $(v,g)$.

(i) If $v \ne \bot$, we must show that some correct process has input $v$.
Since $v \ne \bot$, by Line~\ref{line:cc 3f:initial estimates} there are
$|S_i| - f$ copies of $v$ in $S_i$.
Proposition~\ref{prop:cc 3f:gather core valid} gives the result.

(ii) If all correct processes have the same input, we must show $g = 1$.
Suppose all the correct processes have the same input, 
which we know from (i) is $v$.
By the validity property of gather,
all the elements in $S_i$ for correct processes, of which there are
at least $|S_i|-f$, have value $v$.
Thus $p_i$ sets $g_i$ to $1$ in Line~\ref{line:cc 3f:initial estimates}
and $g = 1$.

\emph{Agreement:}  By Line~\ref{line:cc 3f:initial estimates},
the only possible decisions are $(\bot,0)$ and $(v,1)$ where $v \in V$.
Agreement could be violated only if some correct process
$p_i$ decides $(u,1)$ and another correct process $p_j$ decides $(v,1)$
where $u \ne v$.
Since $u$ appears $|S_i| - f$ times in $S_i$,
Lemma~\ref{lem:cc 3f:gather result} implies that
$u$ appears $|S^C| - f$ times in $S^C$, 
and similarly $v$ appears $|S^C| - f$ times in $S^C$.
By Proposition~\ref{prop:cc 3f:gather core unique}, $u$ must equal $v$.
\end{proof}

If the gather subroutine G used in Algorithm~\ref{alg:cc 3f} is a (communication-closed) canonical-round algorithm, then so is Algorithm~\ref{alg:cc 3f}.  
Since Algorithm~\ref{alg:cc 3f} does not add any communication on top of the gather subroutine when $R = 1$,
the discussion in Section~\ref{subsec:immediate} regarding crusader agreement implies:

\begin{corollary}
\label{cor:gather-lb}
In the asynchronous model with $n \le 5f$, any canonical-round algorithm for gather has an execution in which some correct process does not terminate by round $K$, for any integer $K \geq 1$.\\
Furthermore, such an algorithm is not communication-closed.
\end{corollary}

\section{Modular Connected Consensus Algorithm with Optimal Resilience}
\label{app:cc 3f}

Algorithm~\ref{alg:cc 3f} solves connected consensus for any $R \ge 1$ when $n > 3f$.  
After exchanging inputs via the first call to the gather subroutine G, correct processes have at most one non-$\bot$ value.
When $R \ge 2$, process exchange their values again via a second call to G.
Then a process chooses for its branch a value that appears at least $f+1$ times.
Finally a process chooses its grade by running the approximate agreement subroutine AA with input either $R$, if some value appears overwhelmingly often, or 0 otherwise.
Since AA returns a real number between 0 and $R$, the grade is
set to the ceiling of the AA output.
In fact, using a general approximate agreement subroutine is a bit of overkill; see Appendix~\ref{app:2input-AA} for a simplification.

\subsection{Correctness for Arbitrary $R \geq 1$}

In order to prove correctness of Algorithm~\ref{alg:cc 3f} 
for any value of $R$ greater than 1, we need the following two lemmas.
The first one shows that, thanks to the second call to the gather
subroutine, the branch chosen in Line~\ref{line:cc 3f:choose branch}
is unique.
The second one shows that the decisions are well-defined.

\begin{lemma}
\label{lem:cc 3f:unique branch}
If $v \in V$ occurs $f+1$ times in $T_i$ and $u \in V$ appears
$f+1$ times in $T_j$ for not necessarily distinct correct processes
$p_i$ and $p_j$, then $u = v$.
\end{lemma}

\begin{proof}
At least one occurrence of $v$ in $T_i$ is for a correct process $p_k$
and by the validity property for gather, $v$ is $p_k$'s input
to its second call to G on Line~\ref{line:cc 3f:second gather}.
By the code (cf.\ Line~\ref{line:cc 3f:initial estimates}), $v$ occurs
$|S_k| - f$ times in $S_k$, 
and by Lemma~\ref{lem:cc 3f:gather result}, $v$ occurs
$|S^C| - f$ times in $S^C$.

The same argument shows that $u$ occurs $|S^C| - f$ times in $S^C$.

Proposition~\ref{prop:cc 3f:gather core unique} implies that $u = v$.
\end{proof}

\begin{lemma}
\label{lem:cc 3f:well defined decision}
If $R \ge 2$ and correct process $p_i$ decides $(v,g)$, then $g = 0$
if and only if $v = \bot$.
\end{lemma}

\begin{proof}
\emph{Case 1:} $g_i$ is set to 0 in Line~\ref{line:cc 3f:call AA}.
Then by the code the decision is $(\bot,0)$ and so $v = \bot$.

\emph{Case 2:} $g_i$ is set to a positive value in 
Line~\ref{line:cc 3f:call AA}.
By the validity property of approximate agreement, some correct process
$p_j$ calls AA with input $r_j = R$.
Thus there is some value $u \in V$ that appears $|T_j| - f$ times
in $T_j$.
By Lemma~\ref{lem:cc 3f:gather result}, $u$ appears $|T^C| - f$ times in $T^C$.
By the common core property of gather, $u$ also appears $|T^C| - f$
times in $T_i$.
Since $|T^C| - f = n - 2f$ and $n > 3f$, $u$ appears $f+1$ times
in $T_i$.
By Lemma~\ref{lem:cc 3f:unique branch}, $u$ is the only non-$\bot$
value that appears $f+1$ times in $T_i$.
Thus $p_i$ sets $v_i$ to $u$ in Line~\ref{line:cc 3f:choose branch}.
Since $u \ne \bot$ and $u$ is $v$, $v \ne \bot$.
\end{proof}

\begin{theorem}
\label{thm:cc 3f:correct for any R}
Algorithm~\ref{alg:cc 3f} is correct for any value of $R \ge 2$.
\end{theorem}

\begin{proof}
\emph{Termination} follows from the assumed correctness, including
termination, of the G and AA subroutines.

\emph{Validity:}  Suppose correct process $p_i$ decides $(v,g)$.

(i) If $v \ne \bot$, we must show that some correct process has input $v$.
By the code, $v$ is some value that appears $f+1$ times in $T_i$
(cf.\ Line~\ref{line:cc 3f:choose branch}).
At least one of these appearances of $v$ is for a correct process $p_j$.
By the validity property of gather, $p_j$ makes its second call to G
with input $v$, implying that $v$ appears $|S_j| - f$ times in $S_j$
(cf.\ Line~\ref{line:cc 3f:initial estimates}).
By Proposition~\ref{prop:cc 3f:gather core valid},
$v$ is the input of a correct process. 

(ii) If all correct processes have the same input, we must show $g = R$.
Suppose all correct inputs have input $v$.
Then they call G the first time with input $v$.
By the validity property of gather, $v$ appears $|S_j|-f$ times
in $S_j$ for each correct process $p_j$.
It follows that each correct process calls G the second time with
input $v$.
Again by the validity property of gather, $v$ appears $|T_j| - f$
times in $T_j$ for each correct process $p_j$.
Then each $p_j$ sets $v_j$ to $v$ and $r_j$ to $R$ in
Lines~\ref{line:cc 3f:choose branch} and \ref{line:cc 3f:choose AA input}.
By the validity property of approximate agreement, each correct process
gets $R$ as the output of AA and decides $(v,R)$.
Thus the $g$ component in $p_i$'s decision equals $R$.

\emph{Agreement:}  Suppose correct process $p_i$ decides $(u,g)$
and correct process $p_j$ decides $(v,h)$.
By Lemma~\ref{lem:cc 3f:well defined decision}, 
$g = 0$ if and only if $u = \bot$, and $h = 0$ if and only if
$v = \bot$.

Thus if $g$ and $h$ are both 0, we have $(u,g) = (v,h)$.

Suppose exactly one of $g$ and $h$ is 0, say $g$.
By the 1-agreement property of approximate agreement, $h = 1$.
Thus $(u,g) = (\bot,0)$ and $(v,h) = (v,1)$.

Suppose neither $g$ nor $h$ is 0.
By the 1-agreement property of approximate agreement, $|g - h| \le 1$.
Since $u$ appears $f+1$ times in $T_i$ and $v$ appears $f+1$ times in
$T_j$ (cf.\ Line~\ref{line:cc 3f:choose branch}), 
Lemma~\ref{lem:cc 3f:unique branch} implies $u = v$.
\end{proof}

\subsection{Binding and its Preservation}

\emph{Binding} is a property of algorithms, first proposed for crusader agreement~\cite{AbrahamBDY2022} and extended to connected consensus~\cite{AttiyaW2023}, which restricts the power of the adversary to influence later decisions once the first correct process decides.  In more detail, the {\bf binding property for connected consensus} states that for any execution prefix ending when the first correct process decides, there is a value $v \in V$ such that in every extension of $\alpha$, no correct process decides $(u,g)$ for any $u \ne v$.

We define a similar property for gather, which restricts the power of the adversary to influence the common core once the first correct process finishes.  In more detail, the {\bf binding property for gather} states that for any execution prefix ending when the first correct process finishes, there is a set $S^C$ of size $n-f$ such that in every extension of $\alpha$ the output of every correct process contains $S^C$. 

\begin{figure}
    \centering
    \includegraphics[width=0.45\textwidth]{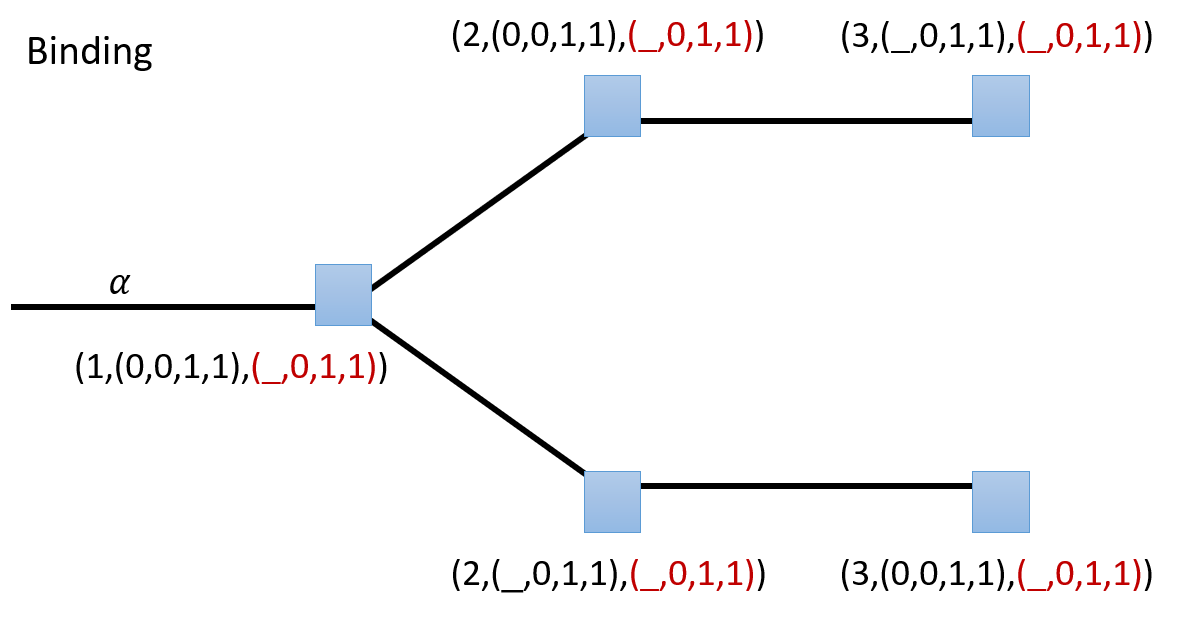}
    \includegraphics[width=0.45\textwidth]{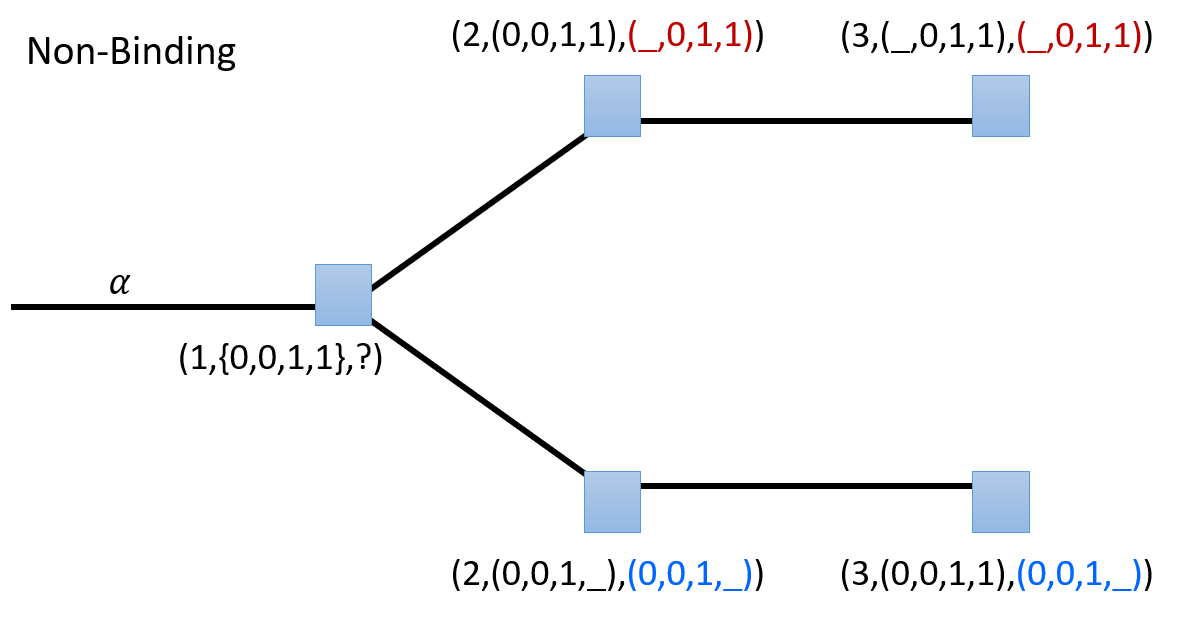}
\caption{
Binding vs.\ non-binding gather examples for $f = 1$ and $n = 4$.  
The processes are $p_1, \ldots, p_4$ with $p_4$ being faulty.  
Black lines indicate executions with time increasing to the right.
A square indicates a decision by a process and is labeled with the
process' id, the process' output, and the common core, 
which is ``?'' if it is not yet fixed.
The output and common core are denoted by a 4-vector with the $i$-th entry being the
value for $p_i$, or underscore if missing.
The common core consisting of the inputs of $p_2,p_3,p_4$ (resp., $p_1,p_2,p_3$) 
is colored red (resp. blue).}
\label{fig:binding gather}
%\Description{Binding and non-binding gather.}
\end{figure}

To demonstrate the binding property for gather, consider Figure~\ref{fig:binding gather},
in which $\alpha$ is an execution prefix that ends as soon as the
first correct process, $p_1$, returns from gather with output
$\langle 0,0,1,1 \rangle$.
The left-hand side of Figure~\ref{fig:binding gather} shows two different extensions of $\alpha$
when gather is binding.  
The common core is fixed to be $\langle \_,0,1,1 \rangle$ 
at the end of $\alpha$, and in both the upper
and lower extensions all the outputs contain the common core, even though
they are not all identical.
The right-hand side of Figure~\ref{fig:binding gather} shows two different extensions of $\alpha$
when gather is not binding.
The common core is not fixed at the end of $\alpha$.
In the upper extension, the common core is $\langle \_,0,1,1 \rangle$
but in the lower extension, the common core is $\langle 0,0,1,\_\rangle$.

Figure~\ref{fig:binding gather} also elucidates the interaction between the binding property
of gather and that of connected consensus in Algorithm~\ref{alg:cc 3f}
with $R = 1$ (crusader agreement).
Recall that Algorithm~\ref{alg:cc 3f} for $R =1$ is simply a call
to a gather subroutine followed by some local computation.
Suppose gather is binding and consider the top of Figure~\ref{fig:binding gather}.
The crusader agreement results in the upper extension are
$\bot$ for $p_1$ and $p_2$, and 1 for $p_3$,
while in the lower extension they are $\bot$ for $p_1$ and $p_3$,
and 1 for $p_2$; these decisions satisfy the binding property for
crusader agreement.
Now suppose gather is not binding and consider the bottom of Figure~\ref{fig:binding gather}.
The crusader agreement results in the upper extension are $\bot$ for
$p_1$ and $p_2$, and 1 for $p_3$,
while in the lower extension they are $\bot$ for $p_1$ and $p_3$,
and 0 for $p_2$.  
This violates the binding property for connected consensus
since in the upper extension, 1 is decided while in the lower extension 0 is decided.

The next theorem shows that Algorithm~\ref{alg:cc 3f} preserves the binding property of the gather subroutine used in it.
A binding gather algorithm appears in~\cite{SternA2024}
(see Appendix~\ref{app:gather}.)

\begin{theorem}
If the subroutine G satisfies binding for gather, then Algorithm~\ref{alg:cc 3f} satisfies binding for connected consensus.
\end{theorem}

\begin{proof}
We show an even stronger property for connected consensus:  as soon as the first correct process returns from its call to G in Line~\ref{line:cc 3f:first gather}, the branch for all decisions by correct processes in all extensions is fixed.

Let $\alpha$ be the prefix of any execution of the algorithm that ends
as soon as the first correct process returns from its first call to G.  
By the binding property of gather, the common core $S^C$ for the first
instance of G is fixed at the end of $\alpha$.
By Proposition~\ref{prop:cc 3f:gather core unique}, there is at most one
value in $V$ that occurs $|S^C| - f$ times in $S^C$.
Thus by Lemma~\ref{lem:cc 3f:gather result} there is at most one value in $V$ that occurs $|S_i| - f$ times in $S_i$,
for any correct process $p_i$, in any extension of $\alpha$.

Suppose $R = 1$.
Since the branch decided on is either $\bot$ or the value that
occurs $|S_i| - f$ times in $S_i$, the only possible non-$\bot$ branch
decision in any extension of $\alpha$ is the value (if any) occurring
$|S^C| - f$ times in $S^C$. 

Suppose $R \ge 2$.
Consider any extension $\alpha'$ of $\alpha$ in which a correct process
$p_i$ decides $(v,g)$ where $v \in V$.  
By the code (cf.\ Line~\ref{line:cc 3f:choose branch}), $v$ appears
$f+1$ times in $T_i$.
At least one of these occurrences corresponds to a correct process $p_j$.
By the validity property of gather, $v$ is $p_j$'s input to its second
call to G on Line~\ref{line:cc 3f:second gather} (the value of variable $v_j$).
By the code, $v$ appears $|S_i| - f$ times in $S_i$.
By Lemma~\ref{lem:cc 3f:gather result}, $v$ appears $|S^C| - f$ times
in $S^C$.
As argued above for $R = 1$, there is only one possible choice for $v$ and it is
fixed at the end of $\alpha$.
\end{proof}

\begin{remark}
The core set $S^C$ is hidden from the processes themselves.
When gather is not binding, the core set is determined only in hindsight, 
and could be captured as a prophecy variable. 
When gather is binding, the core set is determined once the first gather returns, and thus, it becomes a history variable.
(See~\cite{AbadiL1991} for a discussion of prophecy and history variables.)
\end{remark}

\subsection{Comparison with Other Connected Consensus Algorithms}

The result of instantiating Algorithm~\ref{alg:cc 3f} with a binding gather subroutine \cite{SternA2024} (see Appendix~\ref{app:gather})
and one of the optimally-resilient approximate agreement subroutines in~\cite{Coan1988,AbrahamAD2004}
is an optimally-resilient connected consensus algorithm for arbitrary $R$ that satisfies the binding property and has running time logarithmic in $R$.
(The running time is analyzed in
Appendix~\ref{app:gather}.)
To the best of our knowledge, it is the first such algorithm, 
since the algorithm in~\cite{ErbesW2024} does not satisfy binding while the algorithm in~\cite{Vazquez25} requires $n > 5f$ (and yet is not in canonical rounds).

Building blocks similar to connected consensus for arbitrary $R$  have been proposed in earlier work on \emph{synchronous} randomized algorithms for Byzantine agreement, for example $k$-proxcast in~\cite{ConsidineFFLMM2005} and $M$-gradecast in~\cite{GarayKKO2007}.
These papers typically make various cryptographic assumptions such as the existence of broadcast channels or digital signatures.

\section{Summary and Open Questions}
\label{chap:conclusion}

Our results show that canonical-round algorithms, while useful for structuring and reasoning about asynchronous algorithms, fundamentally fail to capture the communication patterns required for optimal-resilience Byzantine fault tolerance. 
Even when randomization is allowed\footnote{
Section~\ref{sec:lb} considers Las Vegas algorithms which have probabilistic termination.  Essentially the same argument works for Monte Carlo algorithms which have probabilistic safety properties, as long as the probability of validity, $\rho$, is more than $1/2$ and the probability of agreement is more than $2-2\rho$.  This subsumes the case when both properties are satisfied with error at most $1/3$.
Related results for general algorithms and $n \le 3f$ appear in~\cite{AbrahamN2021}.}, 
canonical-round algorithms require an unbounded number of rounds in the regime $3f < n \le 5f$, and communication-closed variants are unsolvable. This explains why known optimally-resilient Byzantine algorithms abandon strict round-based structure in favor of content-dependent primitives such as reliable broadcast and gather.

The gather primitive provides one concrete way to express the content-dependent communication patterns excluded by canonical rounds. 
We show that gather admits constant-time 
solutions with optimal resilience and supports simple reductions to tasks such as connected consensus, while preserving desirable properties such as binding.

Several open questions remain. 
First, the threshold $n = 5f$ marks a sharp boundary for bounded-round canonical-round algorithms in the Byzantine setting. 
It is natural to ask whether this threshold is strict:  
does every optimally-resilient algorithm admit a canonical-round formulation when $n > 5f$ as do randomized consensus~\cite{BenOr1983}, approximate agreement~\cite{DolevLPSW1986}, and connected consensus
(see Appendix~\ref{app:cc 5f})?
Second, it would be interesting to understand whether similar limitations arise in other adversarial models, such as partially synchronous systems or settings with authentication, where equivocation can be avoided by other means.

\section*{Acknowledgments}
Hagit Attiya and Itay Flam are supported by the Israel Science Foundation (22/1425 and 25/1849).

\bibliographystyle{plain}
\bibliography{references}

\appendix  

\section{(Binding) Gather Algorithm}
\label{app:gather}

We present an algorithm that solves the gather problem for $n > 3f$ in constant time,
achieving a common core of size $n-f$ and satisfying the binding property when required. 
The algorithm requires three communication phases for the non-binding case and four phases if binding is required. 
Since the algorithm is not in canonical rounds due to its use of reliable broadcast and content-dependent criteria to define communication phases, we measure running time instead of rounds.
When instantiated with Bracha's reliable broadcast primitive, the algorithm has worst-case running time of 7 time units for the non-binding variant and 9 time units for the binding variant.
Corollary~\ref{cor:gather-lb} justifies the fact that the algorithm is not in canonical-round form.

\subsection{The Algorithm}

The gather algorithm we describe is based on~\cite{SternA2024,SternA2021}.
Initially, each process disseminates its input value using an instance of a reliable broadcast primitive with itself as the designated sender; these are ``phase 1'' messages.
Processes then wait to accept $n-f$ phase 1 messages from the reliable broadcast instances.
As asynchrony can cause different processes to accept messages in different orders, no common core of size $n-f$ can be guaranteed yet and so processes proceed by exchanging messages over point-to-point channels, i.e., not via reliable broadcast.
Each process $p_i$ sends a ``phase 2'' message, which contains the set $T_i$ of (process id, value) pairs obtained from the first $n-f$ phase 1 messages it has accepted.
Process $p_i$ {\em approves} a phase 2 (or larger) message when it has also accepted (via reliable broadcast) all the values contained in the message; after approving $n-f$ phase 2 messages, it computes the union $U_i$ of all the sets in these messages.
At this point, 
a common core is still not guaranteed
so processes continue for another phase.
Process $p_i$ sends a ``phase 3'' message containing $U_i$ and
after approving $n-f$ phase 3 messages, it computes the union $V_i$ of all the sets in these message.
As shown in Lemma~\ref{lemma:gather:common_core}, a common core is now guaranteed.
However, the {\it binding} common core property is not guaranteed 
and requires one final phase.
Process $p_i$ sends a ``phase 4'' message containing $V_i$ and after approving $n-f$ phase 4 messages, it computes the union $W_i$ of all the sets in these messages.
Lemma~\ref{lemma:gather:binding} shows that the binding property is now ensured.

The pseudocode for the gather algorithm 
is presented in Algorithm~\ref{alg:binding_gather}.  
Three threads run concurrently on each process.  
One thread handles the acceptance of messages sent using the reliable broadcast instances.
Another thread handles the receipt of messages sent on the point-to-point channels.
The main thread starts when the algorithm is invoked.
Every time a message is accepted in the reliable broadcast thread or received in the point-to-point channels thread, the condition for the current wait-until statement in the main thread is evaluated.  
Thus, progress can be made in the main thread either when a reliable broadcast message is accepted, possibly causing more pairs to be accepted and thus more previously received messages to be approved, or when a point-to-point channel message is received, possibly causing the number of approved messages received to increase.

\begin{algorithm}[t]
\small
\begin{algorithmic}[1]{}
\Statex \Comment{Reliable Broadcast Acceptance Thread}
\end{algorithmic}

\begin{mdframed}[linecolor=black,linewidth=1pt,innerleftmargin=2pt,innerrightmargin=2pt,innertopmargin=3pt,innerbottommargin=3pt]
\begin{algorithmic}[1]
\State {\bf when} r-accept$(\langle 1, x \rangle,p_j)$ occurs:
    \Comment{$p_j$ is sender}
\State \quad add $\langle j, x \rangle$ to AP$_i$ \Comment{set of \emph{accepted pairs}}
\end{algorithmic}
\end{mdframed}

\begin{algorithmic}[1]{}
\Statex \Comment{Point-to-Point Channel Message Receipt Thread}
\end{algorithmic}

\begin{mdframed}[linecolor=black,linewidth=1pt,innerleftmargin=2pt,innerrightmargin=2pt,innertopmargin=3pt,innerbottommargin=3pt]
\begin{algorithmic}[1]
\setcounter{ALG@line}{2}
\State {\bf when} receive$(m)$ for sender $p_j$ occurs:
\State \quad add $m$ to RM$_i$ \Comment{set of \emph{received messages}}
\end{algorithmic}
\end{mdframed}

\begin{algorithmic}[1]{}
\Statex \Comment{Main Thread}
\end{algorithmic}

\begin{mdframed}[linecolor=black,linewidth=1pt,innerleftmargin=2pt,innerrightmargin=2pt,innertopmargin=3pt,innerbottommargin=3pt]
\begin{algorithmic}[1]
\setcounter{ALG@line}{4}
\Statex \emph{Terminology:}  a message $(r,X)$ is an \emph{approved phase $r$ message} if $X \subseteq AP_i$
\State {\bf when} \emph{gather}($x_i$, \emph{binding}) is invoked:
   \Comment{$x_i$ is $p_i$'s input, \emph{binding} is a Boolean}
\State \quad r-broadcast($\langle 1, x_i \rangle,p_i$) \Comment{initiate reliable broadcast instance with sender $p_i$}
\State \quad r-broadcast($-,p_j$) for all $j \ne i$   \Comment{and participate in instances with other senders}
\State \quad {\bf wait until} $|AP_i| = n-f$ \Comment{$n-f$ accepted pairs}
\State \quad $T_i \leftarrow AP_i$
\State \quad send $\langle 2, T_i \rangle$ to all processes \Comment{phase 2 message}
\State \quad {\bf wait until} $RM_i$ contains $n-f$ approved phase 2 messages
\State \quad $U_i \leftarrow \bigcup T_j$ such that $T_j$ is in an approved phase 2 message
            \label{line:gather compute U}
\State \quad send $\langle 3, U_i \rangle$ to all processes \Comment{phase 3 message}
\label{gather:line:wait_T_i}
\State \quad {\bf wait until} $RM_i$ contains $n-f$ approved phase 3 messages
\State \quad $V_i \leftarrow \bigcup U_j$ such that $U_j$ is in an approved phase 3 message
                       \label{line:gather:phase 3 end}
\State \quad {\bf if} $\neg${\it binding} {\bf then} {\bf return} $V_i$ 
                       \label{line:gather:non-binding return}
\State \quad{\bf else} send $\langle 4, V_i \rangle$ to all processes {\bf endif} \Comment{phase 4 message}
\State \quad {\bf wait until} $RM_i$ contains $n-f$ approved phase 4 messages
\State \quad $W_i \leftarrow \bigcup V_j$ such that $V_j$ is in an approved phase 4 message
\State \quad {\bf return} $W_i$
\end{algorithmic}
\end{mdframed}

\caption{Binding / non-binding gather using reliable broadcast ($n > 3f$); code for process $p_i$.}
\label{alg:binding_gather}
\end{algorithm}

\subsection{Correctness}

We show that the gather algorithm is correct for 
%any $f \ge 1$ and 
any $n > 3f$.

\begin{theorem}
\label{thm:gather:correctness}
Algorithm~\ref{alg:binding_gather} solves the gather problem and if the argument {\em binding} is true then it satisfies the binding property for gather.
\end{theorem}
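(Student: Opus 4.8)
The plan is to establish each of the four gather properties---validity, agreement, termination, and common core---together with the binding common core property in the binding case, by induction on the phase structure of Algorithm~\ref{alg:binding_gather} and by exploiting the guarantees of the underlying reliable broadcast primitive. The properties inherited directly from reliable broadcast (validity and agreement) are the easiest: since every value a correct process places in its output sets originates from an accepted reliable-broadcast message, \textbf{validity} follows because reliable broadcast's validity forces $(j,x) \in S_i$ to carry $x=x_j$ for correct $p_j$, and \textbf{agreement} follows because reliable broadcast's agreement ensures that no two correct processes accept different values from the same sender, so any $(k,x)\in S_i$ and $(k,x')\in S_j$ must have $x=x'$.

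For \textbf{termination}, I would argue phase by phase that no correct process blocks at any \textbf{wait until}. The first wait ($|AP_i| = n-f$) completes because all $n-f$ correct senders eventually have their phase-1 values accepted by reliable broadcast validity. For the later waits, the key is that a phase-$r$ message $(r,X)$ is \emph{approved} only once all pairs in $X$ are in $AP_i$; here reliable broadcast's totality (relay) property is essential, since it guarantees that every value accepted by any correct process is eventually accepted by all correct processes. Thus every phase-$r$ message sent by a correct process eventually becomes approved at every correct process, and since all $n-f$ correct processes eventually send their phase-$r$ messages, each correct process eventually approves $n-f$ of them and proceeds. This inductively carries a correct process through phases 2, 3, and (in the binding case) 4 to return.

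The substance of the theorem lies in the \textbf{common core} property, and this is where I expect the main obstacle. The strategy is to identify a set $S^C$ of size $n-f$ contained in every correct output, built from the quorum-intersection structure across phases. Each correct process computes $U_i$ as a union over $n-f$ approved phase-2 sets, and $V_i$ as a union over $n-f$ approved phase-3 sets; two quorums of size $n-f$ out of $n$ intersect in at least $n-2f \ge f+1$ processes, so the phase-2 set of some \emph{correct} process is incorporated (via the union) into every correct $V_i$. The delicate point---flagged by the paper via Proposition~\ref{prop:gather:not common}---is that a single round of such unioning does \emph{not} suffice to guarantee a common core of the full size $n-f$ when $f \ge 2$, which is precisely why the algorithm runs the extra phase-3 unioning; Lemma~\ref{lemma:gather:common_core} is cited as establishing that the common core is guaranteed after phase 3. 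I would therefore invoke Lemma~\ref{lemma:gather:common_core} to obtain the common core for the non-binding return value $V_i$, and observe that the binding output $W_i \supseteq V_i$ inherits it (since $W_i$ is a union that includes each contributing $V_j$, and in particular contains the common core already present in every $V_j$).

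For the \textbf{binding common core} property in the binding case, the plan is to invoke Lemma~\ref{lemma:gather:binding}, which is cited as showing that the phase-4 unioning fixes the common core at the moment the first correct process returns. The argument structure is that once the first correct process has accepted its determining set of phase-3 messages and computed $V_i$, the quorum-intersection guarantees across phase 4 pin down a specific $S^C$ of size $n-f$ that must appear in every correct $W_j$ in \emph{every} extension of the current prefix---because the relevant phase-3 sets have already been broadcast and, by reliable broadcast totality, their contents are irrevocably destined for every correct process's accepted-pairs set regardless of how the adversary schedules the remainder. The hard part is verifying that one additional phase genuinely upgrades the (hindsight-determined) common core to a (prefix-determined) binding common core; I would lean on Lemma~\ref{lemma:gather:binding} for the counting argument and combine it with the already-established common core, termination, validity, and agreement to conclude the theorem.
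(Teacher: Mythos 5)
Your treatment of validity, agreement, and termination matches the paper's: validity and agreement are inherited from reliable broadcast, and termination follows phase by phase from reliable broadcast's validity plus totality (these are exactly the paper's progress propositions for phases 2--4). The gap lies in the two parts that carry all the weight: the common core and the binding common core. In the paper, Lemma~\ref{lemma:gather:common_core} and Lemma~\ref{lemma:gather:binding} are not freestanding results available for citation --- they are stated and proved \emph{inside} the proof of Theorem~\ref{thm:gather:correctness}. Invoking them is therefore circular: a blind proof of the theorem must supply their arguments, and your proposal explicitly defers to them instead (``I would therefore invoke Lemma~\ref{lemma:gather:common_core}'', ``I would lean on Lemma~\ref{lemma:gather:binding} for the counting argument'').

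Moreover, the inline justification you offer in their place would not suffice. You argue that since two quorums of size $n-f$ intersect in at least $n-2f \ge f+1$ processes, ``the phase-2 set of some correct process is incorporated (via the union) into every correct $V_i$.'' Pairwise quorum intersection shows only that \emph{each} correct $V_i$ contains the phase-2 set of \emph{some} correct process --- the witnessing process may differ from one $V_i$ to another, and pairwise-large intersections do not yield a single large set common to all outputs (compare $\{1,2\},\{2,3\},\{1,3\}$). What the paper actually does is a two-step argument: first, a pigeonhole count over $G$, the first $n-f$ correct processes to complete phase 3 --- each approves at least $n-2f \ge f+1$ phase-2 messages from correct senders, giving at least $(n-f)(f+1)$ approvals in total, which forces the existence of \emph{one} correct $p_j$ whose $T_j$ is approved by $f+1$ correct processes $p_{i_0},\dots,p_{i_f}$, so $T_j \subseteq U_{i_k}$ for each of them; second, since $(n-f)+(f+1)>n$, every correct process's set of $n-f$ approved phase-3 messages must include one from some $p_{i_k}$, hence $T_j \subseteq U_{i_k} \subseteq V_i$ for \emph{every} correct $p_i$, and $S^C = T_j$. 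The binding lemma then repeats the intersection step one phase later: the $f+1$ correct phase-4 senders approved by the first decider are fixed within the prefix $\alpha$, their $V$ sets already share a common $S^C$ of size $n-f$, and in any extension every correct decider must approve a phase-4 message from one of them. Without these two counting arguments your proposal establishes only the easy properties and restates, rather than proves, the hard ones.
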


\begin{proof}
Suppose the pair $\langle k,x \rangle$ is included in the output of a correct process $p_i$.
Then $p_i$ approved the pair after r-accepting it from the reliable broadcast.
If $p_k$ is correct, then by validity of reliable broadcast, $x$ is $p_k$'s input, and the validity property of gather holds.
Furthermore, if another correct process $p_j$ has the pair $\langle k,x' \rangle$ in its output, then it r-accepted that pair.  By agreement of reliable broadcast $x = x'$, and thus the agreement property of gather also holds.

We next argue progress through the phases of the algorithm. 
Since there are at least $n-f$ correct processes, each of which initially r-broadcasts its input, the validity property of reliable broadcast implies that every correct process eventually accepts at least $n-f$ pairs.
Thus each correct process sends a phase 2 message to all processes.

If any correct process $p_i$ sends $T_i$ in a phase 2 message, then it has accepted all pairs in $T_i$.
Thus, if another correct process $p_j$ receives $T_i$ in a phase 2 message from $p_i$, the totality property of reliable broadcast implies that $p_j$ eventually accepts all the pairs in $T_i$, and approves the phase 2 message from $p_i$ containing $T_i$.
This implies:

\begin{proposition}
\label{prop:gather:third message}
Every correct process eventually sends a phase 3 message.
\end{proposition}

By Proposition~\ref{prop:gather:third message} and an argument similar to the one proving it, we also have: 

\begin{proposition}
\label{prop:gather:fourth message}
If \emph{binding} is false, then every correct process eventually terminates; otherwise, it eventually sends a phase 4 message.
\end{proposition}

Finally, for the binding version of the algorithm, by Proposition~\ref{prop:gather:fourth message} and similar arguments:

\begin{proposition}
\label{prop:gather:termination}
If \emph{binding} is true, then every correct process eventually terminates.
\end{proposition}

The next lemma shows that the common core property holds for the sets $V_i$ of correct processes. 
Since the non-binding version of Algorithm~\ref{alg:binding_gather} terminates in Line~\ref{line:gather:non-binding return} and returns $V_i$, this implies that the common core property holds for that version.

\begin{lemma}\label{lemma:gather:common_core}
There exists a set $S^C$ of size $n-f$ that is contained in every set $V_i$ computed by a correct process $p_i$ in Line~\ref{line:gather:phase 3 end}.
\end{lemma}

\begin{proof}
We first argue that there is a correct process $p_j$ and a set of $f+1$ distinct correct processes $p_{i_0},\ldots,p_{i_f}$ (which might include $p_j$) such that $T_j \subseteq U_{i_k}$, for every $k, 0 \leq k \leq f$.

Let $G$ be the set consisting of the first $n-f$ correct processes that complete phase 3; we will show that $G$ must contain the desired $p_{i_0}$ through $p_{i_f}$. 
Let $m$ be the total number of phase 2 messages from correct processes that are approved by processes in $G$ before they send their phase 3 messages, counting duplicates (i.e., if both $p_i$ and $p_j$ approve a phase 2 message from $p_k$, count that as two messages).
Let $t \le f$ be the actual number of faulty processes in the execution.
Since each of the $n-f$ processes in $G$ approves $n-f$ phase 2 messages before sending its phase 3 message, at least $n-f-t$ of which are from correct processes, $m$ is at least $(n-f)(n-f-t)$.

Suppose in contradiction that there is no correct process whose
phase 2 message is approved by at least $f+1$ processes in $G$ before they send their phase 3 messages.
For each of the $n-t$ correct processes in the system, its phase 2 message is approved by at most $f$ processes in $G$ before they send their phase 3 messages.
Thus $m$ is at most $(n-t)f$.
\begin{align*}
(n-f)(n-f-t) &\le (n-t)f \\
\Rightarrow (n-f)(n-t) - (n-f)f &\le (n-t)f \\
\Rightarrow (n-f)(n-t) &\le (n-t)f + (n-f)f \\
\Rightarrow (n-f)(n-t) &\le (n-t)f + (n-t)f \quad\text{since $n-f \le n-t$} \\
\Rightarrow n-f &\le 2f \\
\Rightarrow n &\le 3f
\end{align*}
Contradiction.

Thus, the phase 2 message sent by at least one correct process, call it $p_j$, is approved by at least $f+1$ processes in $G$ during phase 3, call any $f+1$ of them $p_{i_0}$ through $p_{i_f}$. 
In other words, $T_j \subseteq U_{i_k}$, for every $k, 0 \le k \le f$.

In Line~\ref{line:gather:phase 3 end}, a correct process $p_i$ computes $V_i$ as the union of the sets of pairs appearing in the (at least) $n-f$ approved phase 3 messages it has received.
Since $(n-f) + (f+1) > n$, it is not possible for the senders of these $n-f$ approved 
phase 3 messages to be distinct from the $f+1$ processes $p_{i_0}$ through $p_{i_f}$. 
Thus at least one of the phase 3 messages approved by $p_i$ is from $p_{i_k}$ for some $k, 0 \le k \le f$, 
which implies that $U_{i_k} \subseteq V_i$.

Thus $T_j \subseteq U_{i_k} \subseteq V_i$, so setting $S^C$ equal to $T_j$ proves the lemma.
\end{proof}

We next proceed to show the binding property, when the \emph{binding} flag is true and the algorithm goes beyond Line~\ref{line:gather:non-binding return}. 
Note that the binding property encompasses the common core property.

\begin{lemma}
\label{lemma:gather:binding}
If \emph{binding} is true then Algorithm~\ref{alg:binding_gather} 
satisfies the binding property.
\end{lemma}

\begin{proof}
Let $\alpha$ be any execution prefix that ends when the first correct process $p_i$ decides, by outputting $W_i$.  
Before deciding, $p_i$ approves $n-f$ phase 4 messages, at least $n - 2f \ge f+1$ of which are from correct processes; choose exactly $f+1$ of these correct senders and denote them by $p_{i_0}, \ldots, p_{i_f}$. 

Let $S^C$ be the set of size $n-f$ contained in each of $V_{i_0}$ through $V_{i_f}$ (the contents of the phase 4 messages approved by $p_i$) whose existence is guaranteed by Lemma~\ref{lemma:gather:common_core}. 
We will show that $S^C$ is included in the decision of every correct process in every extension of $\alpha$.

Let $\alpha'$ be any extension of $\alpha$ and $p_j$ a correct process that decides in $\alpha'$, by outputting $W_j$.
By the code, $p_j$ approves $n-f$ phase 4 messages before deciding.
Since $(n-f) + (f+1) > n$, at least one of these approved phase 4 messages is from a correct process $p_{i_k}$, $0 \le k \le f$, one of the processes whose phase 4 message was approved by $p_i$ in $\alpha$.
Thus $S^C \subseteq V_{i_k} \subseteq W_j$.
\end{proof}

This completes the proof of the theorem.
\end{proof} 

We show that the common core property (even without binding) is not 
satisfied after phase 2. 
That is, the $U_i$ sets computed by the correct processes in Line~\ref{line:gather compute U} do not necessarily form a common core.
%namely, if a correct process were to complete the algorithm by returning the $U_i$ set computed in Line~\ref{line:gather compute U}.

\begin{proposition}
\label{prop:gather:not common}
When $f = 2$ and $n = 7$, Algorithm~\ref{alg:binding_gather} does not ensure the common core property after phase 2.
\end{proposition}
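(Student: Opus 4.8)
The plan is to exhibit a single execution with $n=7$ and $f=2$ in which the sets $U_i$ computed by the correct processes in Line~\ref{line:gather compute U} have a common intersection of size only $4$, which is strictly less than the required common-core size $n-f = 5$. I would label the processes $p_1,\dots,p_7$, let $p_6$ and $p_7$ be Byzantine, and let each $p_j$ hold input $j$, so that its reliable-broadcast pair is identified with the index $j$. I would designate $\{1,2,3,4\}$ as ``stable'' pairs accepted by every correct process, and $\{5,6,7\}$ as ``volatile'' pairs; the whole point is to arrange that the three correct processes $p_1,p_2,p_3$ are each missing a \emph{different} volatile pair at the moment they compute $U_i$, while $p_4$ and $p_5$ accept everything.

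Concretely, I would fix the accepted sets $A_1=\{1,2,3,4,6,7\}$, $A_2=\{1,2,3,4,5,7\}$, $A_3=\{1,2,3,4,5,6\}$ and $A_4=A_5=\{1,\dots,7\}$ at the moment each $p_i$ finishes phase $2$. These are realizable by asynchronous scheduling: the \textbf{agreement} property of reliable broadcast only forces all acceptors of a pair to agree on its value (which they trivially do here), and \textbf{totality} is merely an \emph{eventual} guarantee, so in a finite prefix one may delay $p_1$'s acceptance of $5$, $p_2$'s of $6$, and $p_3$'s of $7$. I would then set the correct phase-$2$ contents to $T_j=\{1,2,3,4\}\cup\{w_j\}$ with $w_1=6,\,w_2=7,\,w_3=5,\,w_4=7,\,w_5=5$ (so that $j\in T_j$ and the volatiles are jointly covered), and determine for each $p_i$ exactly which correct phase-$2$ messages it can \emph{approve}---namely those $T_j$ that avoid $p_i$'s missing pair. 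A short computation then yields $U_1=\{1,2,3,4,6,7\}$, $U_2=\{1,2,3,4,5,7\}$, $U_3=\{1,2,3,4,5,6\}$ and $U_4=U_5=\{1,\dots,7\}$, whose common intersection is exactly $\{1,2,3,4\}$, contradicting the common-core property.

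The step I expect to be the main obstacle is the approval bookkeeping: each correct process must still collect $n-f=5$ approved phase-$2$ messages even though it refuses every correct message that mentions its missing volatile pair. A direct count shows $p_1,p_2,p_3$ approve only $3$, $4$, and $3$ correct messages respectively, so I would use the two Byzantine processes $p_6,p_7$ to pad each process up to $5$ by sending it a tailored phase-$2$ message whose content is a subset of that process's accepted set (hence approvable, and contributing no new volatile pair), while $p_4,p_5$ simply approve all five correct messages, delivered before any Byzantine message. A companion sanity check worth highlighting is that the failure cannot be reduced to a two-process argument: any two correct processes still share an approved correct message (each of size $5$), so every \emph{pairwise} intersection $|U_i\cap U_j|$ remains $\ge 5$, and it is only the \emph{collective} intersection that collapses to $4$. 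This is precisely the gap that phase~$3$ (and Lemma~\ref{lemma:gather:common_core}) is introduced to close.
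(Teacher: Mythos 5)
Your proposal is correct and uses essentially the same technique as the paper's own proof: an explicit counterexample execution for $n=7$, $f=2$ in which asynchrony delays each of three correct processes from accepting a different pair, Byzantine processes pad each correct process's approvals with tailored, already-approvable phase-2 messages, and the resulting $U_i$ sets (all individually of size at least $n-f$) have a collective intersection of size only $4 < n-f$. The concrete sets differ from the paper's (the paper rotates the correct inputs $1,2,3$ as the missing elements, you rotate $5,6,7$), but the construction and the argument are the same.
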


\begin{proof}
Consider the following example.
Let $p_1,..,p_5$ be correct processes and $p_6, p_7$ be Byzantine.
Denote each process' input by its index (e.g. $p_1$'s input is $x_1 = 1$).
Table~\ref{table:no common core f=2} illustrates an order of events that results in a $U_i$ set for each correct process (for simplicity, we replace the pair $(i,i)$ with $i$ in the table).

\begin{table}[bt]
\begin{center}
\begin{tabular}{|c|c|c|c|c|c|c|c|}
\hline
 & $p_1$ & $p_2$ & $p_3$ & $p_4$ & $p_5$ & $p_6$ & $p_7$ \\
\hline
Input & 1 & 2 & 3 & 4 & 5 & 6 & 7 \\
\hline
$T_i$ set   & 1,4,5,6,7 & 2,4,5,6,7 & 3,4,5,6,7 & 2,3,4,6,7 & 1,4,5,6,7 & NA & NA \\
\hline
\makecell{Approved \\ $T_i$ sets}   & \makecell{ $T_1 \cup T_3 \cup $ \\ $T_5 \cup T_6 \cup$ \\ $T_7$ }  & \makecell{ $T_1 \cup T_2 \cup $ \\ $T_5 \cup T_6 \cup$ \\ $T_7$ } & \makecell{ $T_2 \cup T_3 \cup$ \\ $T_4 \cup T_6 \cup$ \\ $T_7$ } & \makecell{ $T_2 \cup T_3 \cup$ \\ $T_4 \cup T_6 \cup$ \\ $T_7$ } & \makecell{ $T_1 \cup T_2 \cup$ \\ $T_5 \cup T_6 \cup$ \\$ T_7$ } & NA & NA \\
\hline
\makecell{Resulting \\ $U_i$ sets} & 1,3,4,5,6,7 & 1,2,4,5,6,7 & 2,3,4,5,6,7 & 2,3,4,5,6,7 & 1,2,4,5,6,7 & NA & NA \\
\hline
\end{tabular}
\end{center}
    \caption{Counter-example with no common core before the third phase, for $n=7$, $f=2$.}
    \label{table:no common core f=2}
\end{table}

Since the adversary controls the scheduling, we can assume that each row in the table is executed in a ``linear'' manner. For example, each process r-broadcasts its input, then $p_1$ r-accepts messages from $p_4,p_5,p_6,p_7$ and itself (similarly for the other correct processes), and finally each correct process receives $n-f$ $T_j$ sets (in approved phase 2 messages) and immediately r-accepts any pairs included in these sets which it has not accepted so far, and thus approves\footnote{A set is approved if it is contained in a message that is approved.}
all received sets. 
Byzantine processes send their ``input'' to the correct processes via reliable broadcast, so if two correct process r-accept a pair from a Byzantine process, it is the same pair. The Byzantine processes can send any arbitrary $T_i$ set in a phase 2 message, so they send to each correct process $p_i$ a set that equals the correct process’ $T_i$ set, and therefore the correct processes immediately approve the sets sent by Byzantine processes. 

Were correct processes to decide after computing their $U_i$ sets in the example above, there wouldn't be a common core of size $n-f = 5$,
since the size of the intersection of $U_1$ through $U_5$ is only 4.
\end{proof}

\subsection{Special Case of One Faulty Process}
\label{sec:corner}

In this subsection we show that when $f = 1$,
the gather algorithm achieves a 
common core after phase 2 and a \emph{binding} common core after phase 3. 
This is one phase less than is needed in the general case when $f \ge 2$.

The next lemma implies that a common core is achieved after phase 2.

\begin{lemma}
\label{lemma:gather:corner common core}
When $f = 1$ and $n > 3$, Algorithm~\ref{alg:binding_gather} ensures that there exists a set $S^C$ of size $n-f=n-1$ that is contained in every set $U_i$ computed by a correct process $p_i$ in Line~\ref{line:gather compute U}.
\end{lemma}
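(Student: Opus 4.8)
The plan is to adapt the counting argument of Lemma~\ref{lemma:gather:common_core} but to extract the common core one phase earlier---from the sets $U_i$ computed in Line~\ref{line:gather compute U} rather than from the $V_i$ sets---exploiting the fact that $f=1$ makes the relevant overlaps tight enough to avoid the extra phase. First I would normalize the setting. Since every pair originates from a reliable-broadcast accept, the agreement property of reliable broadcast lets me identify each accepted pair with the id of its sender, so I may treat every $T_k$ and every $U_i$ as a set of process ids (consistency of the underlying values is inherited from reliable broadcast). Each correct process $p_k$ forms $T_k$ from the first $n-f = n-1$ pairs it accepts, so $|T_k| = n-1$ and $T_k$ omits \emph{exactly one} id, which I call $m_k$. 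For a correct process $p_i$ I restrict attention to the guaranteed part $\hat{U}_i = \bigcup\{T_k : p_k \text{ is correct and } p_i \text{ approved } p_k\text{'s phase-2 message}\} \subseteq U_i$. Because $f=1$, among the $n-1$ phase-2 messages that $p_i$ approves at most one comes from the Byzantine process, so $p_i$ approves at least $n-2$ correct senders.

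The heart of the argument is to bound how many ids can be missing from the global intersection $\bigcap_i \hat{U}_i$, where $i$ ranges over correct processes. An id $p$ is absent from $\hat{U}_i$ exactly when every correct sender approved by $p_i$ omits $p$, that is, $m_k = p$ for all such $k$; since $p_i$ approves at least $n-2$ correct senders, this forces at least $n-2$ correct processes to share the same omitted id $p$. The key step is to observe that two \emph{distinct} ids cannot both be excluded: if one correct process excludes $p$ and some (possibly different) correct process excludes $p' \neq p$, then at least $n-2$ correct processes have $m_k = p$ and at least $n-2$ \emph{other} correct processes have $m_k = p'$; as these two groups are disjoint we would need $2(n-2)$ correct processes, which exceeds the $n-1$ available precisely when $n > 3$. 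Hence at most one id is excluded from $\bigcap_i \hat{U}_i$.

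It follows that $\bigcap_i \hat{U}_i$, and therefore the larger $\bigcap_i U_i$, contains at least $n-1$ pairs; taking $S^C$ to be any $n-1$ of them yields a set of size $n-1$ contained in every $U_i$, as required. I expect the main obstacle to lie less in the combinatorics than in pinning down the counting conventions that feed it: I must argue that the $n-f$ approved phase-2 messages are counted per distinct sender, so that the single Byzantine process contributes at most one of them. This is exactly what guarantees that each $p_i$ approves at least $n-2$ correct senders, and in turn yields the clean disjointness estimate $2(n-2) > n-1$---the inequality that fails as soon as $n \le 3$ and thereby isolates $n > 3$ (equivalently $n > 3f$ for $f=1$) as the precise threshold at which the common core already appears after phase~2.
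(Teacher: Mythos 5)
Your proof is correct, and it reaches the paper's conclusion by a noticeably different packaging of the same combinatorial core. The paper argues by contradiction between two specific processes: if there is no common core, then some correct $p_i$ and $p_j$ finish Line~\ref{line:gather compute U} with distinct $(n-1)$-element sets $U_i \ne U_j$; since every approved phase-2 set has size $n-1$ and lies inside $U_i$, \emph{every} set approved by $p_i$---including any sent by the Byzantine process---must equal $U_i$, and likewise for $p_j$, so the two groups of at least $n-2$ correct approved senders are disjoint, and together with one additional sender this forces $n \ge 2(n-2)+1$, i.e., $n \le 3$. Your argument is direct and global: you discard the Byzantine contribution up front by working only with $\hat{U}_i$, the union over correct approved senders, note that each correct $T_k$ omits exactly one id $m_k$, observe that an id can be missing from some $\hat{U}_i$ only if at least $n-2$ correct processes share it as their common omitted id, and conclude via the disjointness count $2(n-2) > n-1$ that at most one id is missing from $\bigcap_i \hat{U}_i$, which therefore already exhibits $S^C$. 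The inequality is the same one the paper uses (counting correct processes rather than all processes), but your decomposition buys two things: you never need the paper's small but essential observation that when $|U_i| = n-1$ even the Byzantine-sent approved sets equal $U_i$, and you obtain the common core constructively as a global intersection rather than by contradiction. The bookkeeping convention you flag---that the $n-f$ approved phase-2 messages come from distinct senders, so at most one is Byzantine---is indeed needed, and the paper's proof relies on it in exactly the same place (``at least $n-2$ of the senders of the phase 2 messages approved by $p_i$ \ldots are correct'').
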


\begin{proof}
We argue that the common core property is satisfied once every correct process $p_i$ approves $n-f = n-1$ phase 2 messages and computes $U_i$ in Line~\ref{line:gather compute U}. 
Since $U_i$ is comprised of phase 2 sets, each of size $n-f=n-1$, it follows that $|U_i|$ is either $n-1$ or $n$.
The common core size is $n-1$.

Assume in contradiction there is an execution with no common core.
Then there are two correct processes $p_i$ and $p_j$ such that $|U_i| = |U_j| = n-1$ but $U_i \ne U_j$.  
W.l.o.g., assume $U_i = \{1,\ldots,n-1\}$ and $U_j = \{2,\ldots,n\}$.
Every phase 2 message received by $p_i$ contains the set $\{1,\ldots,n-1\}$ and every phase 2 message received by $p_j$ contains the set $\{2,\ldots,n\}$.
At least $n-2$ of the senders of the phase 2 messages approved by $p_i$ (resp., $p_j$) are correct; let $A_i$ (resp., $A_j$) be any subset of these processes of size exactly $n-2$.
Since correct processes send phase 2 messages with the same content, $A_i \cap A_j = \emptyset$.  There must be at least one additional process to serve as the sender of the $(n-1)^{st}$ phase 2 messages approved by $p_i$ and $p_j$.
Thus $n \ge |A_i| + |A_j| + 1 = 2n - 3$, which implies $n \le 3$, a contradiction.
\end{proof}

However, the common core computed after phase 2 does not necessarily satisfy the binding property.  

\begin{proposition}
\label{prop:gather:corner not binding}
When $f = 1$,
Algorithm~\ref{alg:binding_gather} does not ensure the binding
common core property after phase 2.
\end{proposition}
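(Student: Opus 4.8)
The plan is to exhibit a single small counterexample instance, taking the corner case $n=4$, $f=1$, and to construct one execution prefix together with two different continuations that are incompatible with any fixed choice of binding common core. Recall that by Lemma~\ref{lemma:gather:corner common core} the (non-binding) common core of size $n-f=3$ already exists after phase~2; the point is that it is not yet \emph{fixed}. Let $p_1,p_2,p_3$ be correct and $p_4$ be Byzantine, and have $p_4$ reliably broadcast a value so that the pair ``$4$'' is accepted (consistently, by agreement of reliable broadcast) by all correct processes. Thus the universe of accepted pairs is $\{1,2,3,4\}$, and every size-$(n-1)$ set $T_i$ is the complement of a single element.

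First I would build the shared prefix $\alpha$ that ends exactly when $p_1$ becomes the first correct process to output, and I would arrange for $p_1$ to output the \emph{full} set $\{1,2,3,4\}$. Concretely, $p_1$ accepts $1,2,4$ first (so $T_1=\{1,2,4\}$) and then also accepts $3$; $p_3$ accepts $1,2,3$ first (so $T_3=\{1,2,3\}$) and sends its phase-2 message. Then $p_1$ approves its own phase-2 set $\{1,2,4\}$, the set $\{1,2,3\}$ from $p_3$, and one Byzantine phase-2 message from $p_4$ (which $p_1$ can legitimately approve, having accepted all four pairs), reaching the threshold of $n-f=3$ approved phase-2 messages. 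Since $\{1,2,4\}\cup\{1,2,3\}=\{1,2,3,4\}$ already, $p_1$ computes $U_1=\{1,2,3,4\}$. Crucially, $p_2$ is held back in phase~1, having accepted only $\{1,2\}$ by the end of $\alpha$, so its first $n-f$ accepted pairs---and hence $T_2$---are \emph{not} yet committed.

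Next I would give two extensions of $\alpha$. In $\alpha'$, pair $4$ is delivered to $p_2$ next, so $T_2=\{1,2,4\}$, and $p_4$ sends $\langle 2,\{1,2,4\}\rangle$ to $p_2$; then $p_2$ approves three copies of $\{1,2,4\}$ (its own, $p_1$'s, and the Byzantine one) and outputs the size-$3$ set $\{1,2,4\}$, which omits pair $3$. In $\alpha''$, pair $3$ is delivered to $p_2$ next, so $T_2=\{1,2,3\}$, and $p_4$ sends $\langle 2,\{1,2,3\}\rangle$; then $p_2$ approves three copies of $\{1,2,3\}$ (its own, $p_3$'s, and the Byzantine one) and outputs $\{1,2,3\}$, which omits pair $4$. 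In both cases every approved set has been fully RB-accepted by its approver, so the approvals are legitimate. Any binding core $S^C$ of size $n-f=3$ must be fixed at the end of $\alpha$ and satisfy $S^C\subseteq S_j$ for every correct $p_j$ in every extension; applying this to $p_2$ yields $S^C\subseteq\{1,2,4\}\cap\{1,2,3\}=\{1,2\}$, a set of size $2<3$, so no such $S^C$ exists, contradicting binding.

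The main obstacle is arranging $\alpha$ so that the first output is genuinely unhelpful to the adversary trying to pin the core: if $p_1$ (or any first outputter) returned a size-$(n-1)$ set, then Lemma~\ref{lemma:gather:corner common core} applied \emph{within} each extension would force every size-$(n-1)$ correct output to equal that set, fixing the core and \emph{satisfying} binding. Hence I must drive the first output to the full set $\{1,2,3,4\}$, while simultaneously keeping $p_2$'s phase-1 acceptance order uncommitted in $\alpha$ so that the same prefix admits two distinct values of $T_2$ leading to two distinct size-$(n-1)$ cores. Checking that $p_1$ reaches its threshold of $n-f$ approved phase-2 messages using only itself, $p_3$, and the Byzantine process (so that no other correct process is pinned), that $p_1$ is genuinely the first correct process to output, and that all the phase-2 messages used are approvable, is the routine but delicate bookkeeping.
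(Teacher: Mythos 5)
Your proof is correct and follows essentially the same construction as the paper's: a prefix in which the first correct process outputs the full set $\{1,2,3,4\}$, followed by two extensions of that prefix in which a held-back correct process outputs two different $3$-element sets whose intersection has size $2 < n-f$, so no binding core can exist. The differences are only cosmetic --- the paper holds back $p_3$ and varies which phase-2 messages $p_2$ approves, while you hold back $p_2$ itself and block unwanted approvals via non-acceptance of a pair --- and your closing observation about why the first output must be the full set (else Lemma~\ref{lemma:gather:corner common core} would make binding automatic) is a nice explanation of the structure that the paper leaves implicit.
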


\begin{proof}
Consider the following example for the case when $n = 4$. 
Suppose processes $p_1$, $p_2$, and $p_3$ are correct and process $p_4$ is Byzantine. 
Let $\alpha$ be the following execution prefix:
\begin{itemize}
\item Each process reliably broadcasts its phase 1 message.
\item $p_1$ accepts $1$, $2$, and $3$ and sends a phase 2 message for $\{1,2,3\}$.
\item $p_1$ accepts $4$.
\item $p_2$ accepts $2$, $3$, and  $4$ and sends a phase 2 message for $\{2,3,4\}$.
\item $p_1$ receives and approves phase 2 messages $\{1,2,3\}$ from $p_1$, $\{2,3,4\}$ from $p_2$, and $\{1,2,3\}$ from $p_4$.
\item $p_1$ returns $\{1,2,3,4\}$.
\end{itemize}
Now we consider two possible extensions of $\alpha$.

In $\alpha_1$:
\begin{itemize}
\item $p_3$ accepts $1$, $2$, and $3$ and sends a phase 2 message for $\{1,2,3\}$.
\item $p_2$ receives and approves phase 2 messages $\{1,2,3\}$ from $p_1$, $\{1,2,3\}$ from $p_3$, and $\{1,2,3\}$ from $p_4$.
\item $p_2$ returns $\{1,2,3\}$.
\end{itemize}
The common core in $\alpha.\alpha_1$ is $\{1,2,3\}$.

Here is a different extension of $\alpha$, call it $\alpha_2$:
\begin{itemize}
\item $p_3$ accepts $2$, $3$, $4$ and sends a phase 2 message for $\{2,3,4\}$.
\item $p_2$ receives and approves phase 2 messages $\{2,3,4\}$ from $p_2$, $\{2,3,4\}$ from $p_3$, and $\{2,3,4\}$ from $p_4$.
\item $p_2$ returns $\{2,3,4\}$.
\end{itemize}
The common core in $\alpha.\alpha_2$ is $\{2,3,4\}$, contradicting the binding property for gather.
\end{proof}

Finally we argue that after phase 3, the binding property for gather is guaranteed when $f = 1$ and $n > 3$.

\begin{lemma}
\label{lemma:gather:corner binding}
If $f = 1$, $n > 3$,
and the {\em binding} flag (input) is true then Algorithm~\ref{alg:binding_gather} satisfies the binding property for gather after 3 phases.
\end{lemma}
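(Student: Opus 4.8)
The plan is to follow the structure of the general binding proof in Lemma~\ref{lemma:gather:binding}, but to anchor the common core one phase earlier, at the end of phase~3 rather than phase~4, exploiting the fact that for $f=1$ a common core already exists among the phase-2 outputs $U_i$ by Lemma~\ref{lemma:gather:corner common core}. Concretely, let $\alpha$ be any execution prefix that ends when the first correct process $p_i$ finishes phase~3, outputting $V_i$. Before outputting, $p_i$ approves $n-f = n-1$ phase~3 messages, each carrying some $U$ set; since at most $f=1$ sender is Byzantine and $n>3$ gives $n-2 \ge 2 = f+1$, at least two of these messages come from distinct correct processes $p_{i_0}$ and $p_{i_1}$. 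Because these are correct senders whose phase~3 messages were already approved within $\alpha$, the sets $U_{i_0}$ and $U_{i_1}$ are determined by $\alpha$.

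The first main step is to produce the candidate core $S^C$ from these two anchored sets. Each of $U_{i_0}, U_{i_1}$ has size $n-1$ or $n$, and I would invoke the argument of Lemma~\ref{lemma:gather:corner common core} (applied to this pair) to conclude that two correct $U$ sets of size $n-1$ must be equal, while a size-$n$ set (one pair per id) contains every size-$(n-1)$ correct $U$ set by the agreement property of reliable broadcast. In all cases $|U_{i_0} \cap U_{i_1}| \ge n-1$, so I fix $S^C$ to be a canonical $(n-1)$-element subset of $U_{i_0}\cap U_{i_1}$; crucially, $S^C$ depends only on $\alpha$.

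The second main step is to show $S^C \subseteq V_j$ for every correct $p_j$ that outputs $V_j$ in any extension $\alpha'$ of $\alpha$. Such a $p_j$ approves $n-1$ phase~3 messages from $n-1$ distinct senders, hence misses exactly $f=1$ process out of the $n$; since $p_{i_0}$ and $p_{i_1}$ are two distinct processes, at least one of them, say $p_{i_k}$, is among $p_j$'s approved phase~3 senders. As $p_{i_k}$ is correct it sends a single phase~3 message carrying the same $U_{i_k}$ fixed in $\alpha$, so $U_{i_k} \subseteq V_j$ and therefore $S^C \subseteq U_{i_k} \subseteq V_j$. Since $|S^C| = n-1 = n-f$, this is exactly the binding common core property for the phase-3 output.

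The delicate point, and the reason phase~3 suffices only when $f=1$, is that the core must be pinned down by $\alpha$ rather than merely existing in hindsight. Anchoring $S^C$ to the two correct senders $p_{i_0}, p_{i_1}$ achieves this, but it relies on those senders' $U$ sets already sharing a common core, which is guaranteed only for $f=1$ by Lemma~\ref{lemma:gather:corner common core} (and fails for $f \ge 2$ by Proposition~\ref{prop:gather:not common}). The pigeonhole step also needs $f+1=2$ anchored correct senders against a single missing sender; for larger $f$ one would need $f+1$ anchored senders whose $U$ sets share a core, which is unavailable after phase~2, forcing the extra phase~4 in the general case.
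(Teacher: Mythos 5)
Your proof is correct and takes essentially the same route as the paper: the paper proves this lemma by repeating the proof of Lemma~\ref{lemma:gather:binding} with $V$ sets replaced by $U$ sets, $W$ sets by $V$ sets, phase 4 by phase 3, and Lemma~\ref{lemma:gather:common_core} by Lemma~\ref{lemma:gather:corner common core} --- precisely your structure of anchoring $f+1=2$ correct phase-3 senders fixed by $\alpha$ and then applying quorum intersection $(n-1)+2 > n$ to every output in every extension. Your only deviation is re-deriving the pairwise fact $|U_{i_0} \cap U_{i_1}| \ge n-1$ by case analysis on set sizes rather than citing Lemma~\ref{lemma:gather:corner common core} as a black box, a presentational rather than substantive difference.
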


Lemma~\ref{lemma:gather:corner binding} is proved the same as Lemma~\ref{lemma:gather:binding} with these changes:
references to $V$ sets are replaced with references to $U$ sets,
references to $W$ sets are replaced with references to $V$ sets,
references to phase 4 are replaced with references to phase 3, and
references to Lemma~\ref{lemma:gather:common_core} are replaced with references to Lemma~\ref{lemma:gather:corner common core}.

\subsection{Time Complexity}
\label{sec:gather time}

We now analyze the worst-case running time of Algorithm~\ref{alg:binding_gather}. 
Recall that, for each execution, we measure the time that elapses between the point when the last correct process begins the algorithm and the point when the last correct process finishes the algorithm, after normalizing the delay of every message between correct processes as taking 1 time unit.  

To analyze the gather algorithm, we assume a black box reliable broadcast primitive which guarantees that the worst-case time for a correct process to accept the message from a correct sender is $T_{cor}$ ({\it cor} for correct sender) and the worst-case time that elapses between the message acceptance of two correct processes is $T_{rel}$ ({\it rel} for relay) even if the sender is Byzantine.

\begin{theorem} 
\label{thm:gather time}
If parameter {\it binding} is false, then Algorithm~\ref{alg:binding_gather} has worst-case running time $T_{cor} + 2 \cdot \max(1,T_{rel})$.
Otherwise it has worst-case running time $T_{cor} + 3 \cdot \max(1,T_{rel})$.
\end{theorem}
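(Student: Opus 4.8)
The plan is to track, phase by phase, the latest time by which every correct process has sent its phase-$r$ message, and to show that each phase after the first costs a uniform additive $\max(1,T_{rel})$ in the worst case. Set time $0$ to be the moment the last correct process begins, so that all correct processes have invoked \emph{gather}—and hence initiated their phase 1 reliable broadcasts—by time $0$. Let $t_r$ denote the latest time by which every correct process has sent its phase-$r$ message (equivalently, entered phase $r+1$). The two target bounds then follow by reading off $t_r$ for the relevant $r$ and adding one more per-phase cost for the terminating approval step.

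First I would establish the base case. Since all of the at least $n-f$ correct senders initiate their phase 1 reliable broadcasts by time $0$, the definition of $T_{cor}$ guarantees that by time $T_{cor}$ every correct process has accepted all of these messages, so $|AP_i| \ge n-f$ and every correct process sends its phase 2 message; thus $t_2 \le T_{cor}$. No relay cost enters here, since we only need the correct senders' own acceptances.

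The heart of the argument is a single per-phase lemma: if every correct process sends its phase-$r$ message by time $t_r$ (for $r \ge 2$), then every correct process approves the phase-$r$ messages of all at least $n-f$ correct senders by time $t_r + \max(1,T_{rel})$. To prove it, fix a correct sender $p_j$ that sends a phase-$r$ message carrying a set $X_j$ at some time $s \le t_r$. By the code, $p_j$ has already accepted every pair of $X_j$ via reliable broadcast before sending, i.e.\ by time $s$. A correct receiver $p_i$ obtains the point-to-point message by time $s+1$, and by the relay guarantee it accepts each pair of $X_j$ within $T_{rel}$ of $p_j$'s acceptance of that pair, hence by time $s + T_{rel}$. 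Approval needs both conditions, so $p_i$ approves $p_j$'s message by time $\max(s+1,\,s+T_{rel}) = s + \max(1,T_{rel}) \le t_r + \max(1,T_{rel})$. As this holds for all $n-f$ correct senders, the corresponding wait-until condition is met by that time. Applying the lemma inductively gives $t_3 \le t_2 + \max(1,T_{rel}) \le T_{cor} + \max(1,T_{rel})$; in the non-binding case a process returns $V_i$ upon approving $n-f$ phase 3 messages, which by the lemma occurs by $t_3 + \max(1,T_{rel}) \le T_{cor} + 2\max(1,T_{rel})$, while in the binding case $t_4 \le T_{cor} + 2\max(1,T_{rel})$ and one further step bounds the return of $W_i$ by $T_{cor} + 3\max(1,T_{rel})$.

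The step I expect to be the crux is the per-phase lemma, and specifically the observation that the receipt delay and the pair-acceptance delay run \emph{in parallel} rather than in series: both are measured from events at or before the send time $s$, so they combine as $\max(1,T_{rel})$ rather than $1 + T_{rel}$. This is exactly what makes the multipliers $2$ and $3$ (rather than larger constants) come out, and it rests on the invariant that a correct process never places in an outgoing set any pair it has not itself already accepted—so the sender's acceptance time, not the receiver's, is what the relay bound is anchored to.
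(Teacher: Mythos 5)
Your proof is correct and follows essentially the same route as the paper's: a phase-by-phase timing argument whose crux is that the point-to-point message delay and the reliable-broadcast relay delay are both anchored at the sender's send/acceptance time, so they combine as $\max(1,T_{rel})$ rather than $1+T_{rel}$. The only difference is presentational: you package the per-phase bound as an explicit lemma applied inductively, whereas the paper argues phase 2 explicitly and treats phases 3 and 4 ``similarly.''
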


\begin{proof}
Every correct process starts the algorithm and invokes its instance of reliable broadcast by time 0.
Thus by time $T_{cor}$, every correct process has accepted pairs from all the $n-f$ correct processes and sends its phase 2 message.
By time $T_{cor} + 1$, every correct process has received phase 2 messages from all the $n-f$ correct processes.
It's possible that one of the pairs accepted by a correct process $p_i$ immediately before sending its phase 2 message is from a Byzantine process $p_k$; thus any other correct process $p_j$ also accepts the pair from $p_k$ by $T_{rel}$ time later.
It follows that every correct process approves $n-f$ phase 2 messages, and sends its phase 3 message, by time $T_{cor} + \max(1,T_{rel})$.

Similarly, we can argue that every correct process approves $n-f$ phase 3 messages and
either decides in the nonbinding case, or sends its phase 4 message in the binding case, by time $T_{cor} + 2 \cdot \max(1,T_{rel})$.

Finally, a similar argument shows that in the binding case, every correct process decides by time $T_{cor} + 3 \cdot \max(1,T_{rel})$.
\end{proof}

We next calculate the worst-case running time for Bracha's reliable
broadcast algorithm~\cite{Bracha1987} (see Algorithm~\ref{alg:bracha}). 
The proof is a timed analog of the liveness arguments in Theorem 12.18 of \cite{AttiyaW2004}.

\begin{lemma}
\label{lemma:bracha time}
For Bracha's reliable broadcast algorithm, $T_{cor} = 3$ and $T_{rel} = 2$.
\end{lemma}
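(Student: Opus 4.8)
The plan is to trace, stage by stage, the propagation of messages through Bracha's protocol—the initial SEND from the sender, the ECHO stage, and the two-step READY stage (amplification upon $f+1$ readies, acceptance upon $2f+1$ readies)—and to bound the elapsed time using the normalization that a message between correct processes is delivered within 1 time unit. I would assume, as permitted by the model, that every correct process wakes up by time 0.

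For $T_{cor}$, I would suppose the sender is correct with input $v$. The sender transmits its SEND message by time 0, so every correct process receives it and sends ECHO$(v)$ by time 1. Since there are $n-f$ correct processes and $n > 3f$, by time 2 every correct process has received at least $n-f$ matching echoes, which exceeds the echo threshold, and therefore sends READY$(v)$. As $n-f \ge 2f+1$, by time 3 every correct process has received at least $2f+1$ READY$(v)$ messages from correct processes and accepts $v$. This gives $T_{cor} = 3$.

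For $T_{rel}$, I would reason about relay starting from the moment the \emph{first} correct process accepts, allowing the sender to be Byzantine. Suppose a correct process accepts $v$ at time $t$. Acceptance requires $2f+1$ READY$(v)$ messages, of which at least $f+1$ come from correct processes; these correct senders therefore sent READY$(v)$ by time $t$, so their messages reach every correct process by time $t+1$. Receiving $f+1$ matching readies triggers the amplification step, so every correct process sends READY$(v)$ by time $t+1$; hence all $n-f \ge 2f+1$ correct readies arrive at every correct process by time $t+2$, and every correct process accepts by then. This yields $T_{rel} = 2$.

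The main subtlety will be the relay bound under a Byzantine sender: the argument must not rely on the ECHO stage, which a Byzantine sender can sabotage, but instead must leverage the amplification rule that $f+1$ readies---guaranteed to include a correct one precisely because acceptance needs $2f+1$---suffice to make every correct process send a ready. Verifying that the two thresholds are simultaneously achievable by the correct processes alone is the single place where resilience enters: $n-f$ exceeds the echo threshold and $n-f \ge 2f+1$ both hold exactly when $n > 3f$, and I would state this explicitly.
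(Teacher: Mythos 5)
Your overall plan coincides with the paper's proof: for $T_{cor}=3$, trace the send/echo/ready stages with the thresholds $n-f \ge (n+f)/2$ and $n-f \ge 2f+1$; for $T_{rel}=2$, start from the first correct acceptor at time $t$, extract $f+1$ correct \emph{ready} senders from the $2f+1$ readies it received, and use the amplification rule to get all correct readies out by $t+1$ and acceptance by $t+2$. However, your relay bound has one step that does not hold as written. You claim that upon receiving $f+1$ matching readies, ``every correct process sends READY$(v)$ by time $t+1$.'' In Bracha's protocol a correct process sends at most one \emph{ready} message; a correct process that has \emph{already} sent a ready for some value $w \neq v$ will not send one for $v$, and then your count of $2f+1$ readies for $v$ reaching every correct process by time $t+2$ is no longer guaranteed. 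What closes this hole is the protocol's consistency property---no two correct processes ever send \emph{ready} messages for different values---which follows from quorum intersection on the echo stage plus induction over the ready-amplification rule, and is precisely what the paper invokes at this point (Lemma 12.17 of \cite{AttiyaW1998}: ``this \emph{ready} message is also for $v$''). Your proof must either cite or reprove that fact; without it, the step fails under a Byzantine sender, which is exactly the regime $T_{rel}$ is about.

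Note that the same issue is benign in your $T_{cor}$ argument: there the sender is correct, so correct processes only echo $v$, no echo quorum or ready amplification can form for any $w \neq v$, and hence every \emph{ready} sent by a correct process is for $v$. It is only the relay case, where the sender may be Byzantine and correct processes may have been driven to send readies before time $t$, that needs the value-consistency lemma, so the omission is localized but real.
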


\begin{proof}
Suppose the sender is correct and begins at time 0 by sending an {\it initial} message.
By time 1, every correct process receives the sender's {\it initial} message and sends its {\it echo} message if it has not already done so.
By time 2, every correct process receives {\it echo} messages from all the correct processes and, since $n-f \ge (n+f)/2$, sends its {\it ready} message if it has not already done so.
By time 3, every correct process receives {\it ready} messages from all the correct processes and, since $n-f \ge 2f+1$, it accepts the message.  
Thus $T_{cor} = 3$.

Now suppose that a correct process $p_i$ accepts the value $v$ from the sender (which may be Byzantine) at time $t$.
Thus $p_i$ has received at least $2f+1$ {\it ready} messages for $v$ by time $t$, and at least $f+1$ of them are from correct processes.
As a result, every correct process receives at least $f+1$ {\it ready} messages for $v$ by time $t+1$ and sends its {\it ready} message by time $t+1$.
As shown in Lemma 12.17 of \cite{AttiyaW2004}, this {\it ready} message is also for $v$.
Thus every correct process $p_j$ receives at least $n-f \ge 2f+1$ {\it ready} messages by time $t+2$ and accepts the value, implying that $T_{rel} = 2$.
\end{proof}

Combining Theorem~\ref{thm:gather time} and Lemma~\ref{lemma:bracha time}, we get:

\begin{corollary}
If Algorithm~\ref{alg:binding_gather} uses Bracha's reliable broadcast algorithm, then its worst-case running time in the nonbinding case is 7, while in the binding case it is 9.
\end{corollary}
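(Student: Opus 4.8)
The plan is to prove this corollary by direct substitution, since it is purely a matter of combining the two preceding results. Theorem~\ref{thm:gather time} expresses the worst-case running time of Algorithm~\ref{alg:binding_gather} in terms of the two black-box parameters $T_{cor}$ and $T_{rel}$ of whatever reliable broadcast primitive is plugged in: $T_{cor} + 2 \cdot \max(1,T_{rel})$ in the nonbinding case and $T_{cor} + 3 \cdot \max(1,T_{rel})$ in the binding case. Lemma~\ref{lemma:bracha time} pins down these parameters for Bracha's algorithm, giving $T_{cor} = 3$ and $T_{rel} = 2$. So the entire proof consists of instantiating the formulas of Theorem~\ref{thm:gather time} with the concrete constants supplied by Lemma~\ref{lemma:bracha time}.

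First I would handle the $\max$ term, which is the only place any arithmetic judgment is needed. Since $T_{rel} = 2 > 1$, we have $\max(1, T_{rel}) = 2$. This is worth isolating explicitly, because if one forgot that the running-time bound uses $\max(1, T_{rel})$ rather than $T_{rel}$ directly, the substitution could in principle go wrong for a reliable broadcast primitive with very fast relay ($T_{rel} < 1$); here, however, Bracha's relay cost comfortably exceeds one message delay, so the $\max$ simply selects $T_{rel}$.

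Second I would carry out the two substitutions. In the nonbinding case the running time is $T_{cor} + 2 \cdot \max(1, T_{rel}) = 3 + 2 \cdot 2 = 7$. In the binding case it is $T_{cor} + 3 \cdot \max(1, T_{rel}) = 3 + 3 \cdot 2 = 9$. These two computations yield exactly the claimed bounds of $7$ and $9$, completing the argument.

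I do not anticipate any genuine obstacle: the corollary is a one-line consequence of the two results it cites, and all the conceptual work has already been done in establishing Theorem~\ref{thm:gather time} (the phase-by-phase timing analysis of the gather algorithm) and Lemma~\ref{lemma:bracha time} (the timed liveness analysis of Bracha's broadcast). The only point demanding mild care is the $\max(1, T_{rel})$ term noted above, and verifying that $T_{rel} = 2$ indeed dominates the constant $1$; everything else is routine substitution and evaluation.
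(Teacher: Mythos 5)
Your proposal is correct and matches the paper exactly: the corollary follows by substituting $T_{cor}=3$ and $T_{rel}=2$ from Lemma~\ref{lemma:bracha time} into the bounds $T_{cor} + 2\cdot\max(1,T_{rel})$ and $T_{cor} + 3\cdot\max(1,T_{rel})$ of Theorem~\ref{thm:gather time}, yielding $7$ and $9$. Your explicit note that $\max(1,T_{rel})=2$ because $T_{rel}>1$ is a small but sound point of care that the paper leaves implicit.
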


If one prefers to measure running time from when the {\it first} correct process begins the algorithm, then these numbers would increase by 1.
The reason is that every correct process wakes up at most one time unit after the first one, due to the receipt of a message.

\section{Two-Input Approximate Agreement}
\label{app:2input-AA}

The use of a general approximate agreement subroutine in
Algorithm~\ref{alg:cc 3f} is a bit of overkill.
The algorithm only needs a solution to approximate agreement in
the special case where the inputs of the correct processes are
drawn from a known set of two values.
We show in Algorithm~\ref{alg:2-input aa}
that this problem can be solved with $n > 3f$
more simply than the optimally-resilient approximate agreement
algorithms in \cite{Coan1988,AbrahamAD2004}.

All of the inter-process communication in Algorithm~\ref{alg:2-input aa} is
encapsulated in a primitive we call \emph{pinch}.  It
ensures that if every correct process begins with an input drawn from a set
$\{a,b\}$, then every correct process returns with an output that
is a non-empty subset of $\{a,b\}$, such that
if two correct processes have 1-element outputs, then their outputs are the same (\emph{agreement})
and if a correct process has $c$ in its output, then some correct
process has input $c$ (\emph{validity}).
An algorithm for pinch is presented in Algorithm~\ref{alg:pinch}; note
that it is \emph{not} in canonical-round form (and in fact pinch is a nontrivial convergence problem).

In Algorithm~\ref{alg:2-input aa} each process repeatedly calls pinch,
initially with its original input, and then calculates a new input for
pinch to use in the next iteration by averaging the values obtained
from pinch.
Each iteration halves the maximum difference between processes'
current values.
We assume that each instantiation of the pinch subroutine
uniquely labels the messages sent so that there is no confusion;
also, once an instantiation returns, any messages received 
subsequently for that instantiation are ignored.

\begin{algorithm}[tb]
\caption{Two-input $\epsilon$-approximate agreement for $\{a,b\}$
where $a,b \in \mathbb{R}$, $a \le b$
($n > 3f$); code for process $p_i$.}

\label{alg:2-input aa}

\begin{algorithmic}[1]{}
\small
\State $y_i \leftarrow x_i$  \Comment{$x_i$ is $p_i$'s input}
\State \textbf{for} $k = 1$ to $\lceil \log_2 ((b-a)/\epsilon) \rceil$
       \label{line:2-input aa:start for}
\State \quad $A_i \leftarrow$ pinch$(y_i)$
       \label{line:2-input aa:pinch}
\State \quad \textbf{if} $A_i = \{c,c'\}$ for some $c$ and $c'$
             \textbf{then} $y_i \leftarrow (c+c')/2$
       \label{line:2-input aa:2 elements}
\State \quad \textbf{else} $y_i \leftarrow c$ where $A_i = \{c\}$ for some $c$
             \textbf{endif}
       \label{line:2-input aa:1 element}
\State \textbf{endfor}
       \label{line:2-input aa:end for}
\State \textbf{return} $y_i$
\end{algorithmic}
\end{algorithm}

%************************************************
  
\begin{algorithm}[tb]
\caption{Pinch ($n > 3f$); code for process $p_i$.}

\label{alg:pinch}

\begin{algorithmic}[1]{}
\small
\Statex \Comment{Main thread}
\end{algorithmic}

\begin{mdframed}[linecolor=black,linewidth=1pt,innerleftmargin=2pt,innerrightmargin=2pt,innertopmargin=3pt,innerbottommargin=3pt]
\begin{algorithmic}[1]
\small
    \State $A_i \leftarrow \emptyset$
    \State send $\langle${\sc echo1}$,x_i \rangle$ to all
              \Comment{$x_i$ is $p_i$'s input}
            \label{line:pinch:send echo1}
    \State \textbf{wait until} $|A_i| = 2$ or ($|A_i| = 1$ and received
               $n-f$ $\langle${\sc echo2}$,x \rangle$ messages for some $x$) 
           \label{line:pinch:wait}
    \State \textbf{return} $A_i$
\end{algorithmic}
\end{mdframed}

\begin{algorithmic}[1]{}
\small
\Statex \Comment{Thread for receiving {\sc echo1} messages}
\end{algorithmic}

\begin{mdframed}[linecolor=black,linewidth=1pt,innerleftmargin=2pt,innerrightmargin=2pt,innertopmargin=3pt,innerbottommargin=3pt]
\begin{algorithmic}[1]
\setcounter{ALG@line}{4}
\small
    \State \textbf{upon} receiving $\langle${\sc echo1}$, x \rangle$: 
    \State \quad \textbf{if} received $f+1$ $ \langle${\sc echo1}$,x \rangle$ 
               messages \textbf{then} 
    \State \quad \quad \textbf{if} haven't sent $\langle${\sc echo1}$,x \rangle$ 
               yet \textbf{then} send $\langle${\sc echo1}$,x \rangle$ to all 
               \textbf{endif}
          \label{line:pinch:relay echo1}
    \State \quad \textbf{elseif} received $n-f$ $ \langle${\sc echo1}$,x \rangle$ 
                        messages \textbf{then} 
    \State \quad \quad \textbf{if} haven't sent any $\langle${\sc echo2}$,* \rangle$
                  yet \textbf{then} send $\langle${\sc echo2}$,x \rangle$ to 
                   all \textbf{endif}
    \State \quad \quad add $x$ to $A_i$
               \label{line:pinch:add echo1s}
    \State \quad \textbf{endif}
\end{algorithmic}
\end{mdframed}

\begin{algorithmic}[1]{}
\small
\Statex \Comment{Thread for receiving {\sc echo2} messages}
\end{algorithmic}

\begin{mdframed}[linecolor=black,linewidth=1pt,innerleftmargin=2pt,innerrightmargin=2pt,innertopmargin=3pt,innerbottommargin=3pt]
\begin{algorithmic}[1]
\setcounter{ALG@line}{11}
\small
    \State \textbf{upon} receiving $\langle${\sc echo2}$, x \rangle$: 
    \State \quad \textbf{if} received $n-f$ $\langle${\sc echo2}$,x \rangle$ 
          \textbf{then} add $x$ to $A_i$ \textbf{endif}
     \label{line:pinch:add echo2s}
\end{algorithmic}
\end{mdframed}

\end{algorithm}

\begin{theorem}
Algorithm~\ref{alg:pinch} for pinch is correct, assuming every correct process'
input is chosen from a set of size 2.
\end{theorem}

\begin{proof}
Suppose correct process $p_i$ outputs $\{y_i\}$. 
Then by Line~\ref{line:pinch:wait}, $A_i = \{y_i\}$
and $p_i$ received $n-f$ {\sc echo2} messages for some $x$.
By Line~\ref{line:pinch:add echo2s}, $p_i$ added $x$ to $A_i$.
Thus $x = y_i$ and $n-f$ {\sc echo2} messages are sent for $y_i$.
Suppose another correct process $p_j$ outputs $\{y_j\}$.
By the same argument, $n-f$ {\sc echo2} messages are sent for $y_j$.
Since each correct process sends only one {\sc echo2} message
and $n > 3f$, $y_i$ must equal $y_j$.
This shows the agreement property for pinch.

We now show the validity property for pinch.
Suppose a value $x$ is in the output set of correct process $p_i$.
If $p_i$ adds $x$ to $A_i$ in Line~\ref{line:pinch:add echo1s}, then it 
receives $n-f$ {\sc echo1} messages for $x$.
At least $f+1$ of these are from correct processes.
Let $p_j$ be the first correct process to send an {\sc echo1} message
for $x$.
It cannot send the message in Line~\ref{line:pinch:relay echo1} since 
no correct process has yet sent that message.
Thus it sends the message in Line~\ref{line:pinch:send echo1} 
and $x$ is its input.

On the other hand, if $p_i$ adds $x$ to $A_i$ in 
Line~\ref{line:pinch:add echo2s},
then it receives $n-f$ {\sc echo2} messages for $x$.
At least $f+1$ of these messages are from correct processes,
including some $p_j$.
The reason $p_j$ sends an {\sc echo2} message for $x$ is that
it has received $n-f$ {\sc echo1} messages for $x$.
As argued in the previous paragraph, some correct process has
input $x$.
\end{proof}

The proof of correctness of Algorithm~\ref{alg:2-input aa} depends
on some invariants, stated in the next lemma.
Iteration 0 of the for loop refers to the point in the code immediately
before Line~\ref{line:2-input aa:start for}.

\begin{lemma}
\label{lem:2-input aa:invariants}
Assume each correct process begins Algorithm~\ref{alg:2-input aa} with
input either $a$ or $b$.
For each $k$,
$0 \le k \le \lceil \log_2(b-a)/\epsilon) \rceil$,
there exist real numbers $r$ and $r'$ in $[a,b]$ such that 
after iteration $k$ of the for loop in Algorithm~\ref{alg:2-input aa},\\
(I1) $|r - r'| \le (b-a)/2^k$; \\
(I2) for each correct process $p_i$, $y_i$ equals $r$ or $r'$; and \\
(I3) if all correct processes have the same input, say $a$,
then $y_i = a$ for every correct process $p_i$.
\end{lemma}

\begin{proof}
The proof is by induction on $k$.

\emph{Basis:}  $k = 0$.
Let $r = a$ and $r' = b$.

(I1) is true since $|a-b| \le (b-a)/2^0$.

(I2) is true by assumption that the input of each correct process is
either $a$ or $b$.

(I3) is true since $y_i$ is initialized to $p_i$'s input.

\emph{Induction:}  Assume for iteration $k-1$ and show for $k$.

By the inductive hypothesis, every correct process $p_i$ ends
iteration $k-1$ and starts iteration $k$ with $y_i$ equal to some $r$
or $r'$ such that $|r - r'| \le (b-a)/2^{k-1}$.
Thus each correct process $p_i$ calls pinch with argument $r$ or $r'$
and the output of
pinch, assigned to variable $A_i$, is either $\{r\}$, $\{r'\}$ or $\{r,r'\}$.
By the agreement property of pinch, there is at most one value,
say $r$, appearing in singleton outputs at correct processes.  By
Lines~\ref{line:2-input aa:2 elements}--\ref{line:2-input aa:1
element}, at the end of iteration $k$, $y_i$ is either $r$ or
$(r+r')/2$.  Thus (I2) holds for $k$.

To show (I1) for $k$:
\begin{align*}
|r - (r+r')/2| &= |r/2 - r'/2| = |r - r'| / 2 \\
               &\le ((b-a)/2^{k-1})/2 \\
               &= (b-a)/2^k. 
\end{align*}

To show (I3) for $k$, suppose all correct processes have the same input, $a$.
By the inductive hypothesis, they all end iteration $k-1$ with value $a$
and call pinch in iteration $k$ with argument $a$.
By the validity property for pinch, they all get $\{a\}$ as the output
of pinch, and thus $y_i = a$ for every correct process $p_i$.
\end{proof}

\begin{theorem}
Algorithm~\ref{alg:2-input aa} for 2-input $\epsilon$-approximate agreement is correct,
assuming every correct process' input is either $a$ or $b$.
\end{theorem}

\begin{proof}
\emph{Termination} holds by the assumed correctness, including termination,
of the pinch subroutine.

We now show \emph{$\epsilon$-agreement} holds.
After the last iteration $k = \lceil \log_2((b-a)/\epsilon) \rceil$,
the outputs of any two correct processes $p_i$ and $p_j$
are $y_i$ and $y_j$.
By Invariant (I2), $y_i$ and $y_j$ are both either $r$ or $r'$. 
By Invariant (I1),
\begin{align*}
|r - r'| &\le (b-a)/2^{\lceil \log_2((b-a)/\epsilon) \rceil} \\
     &\le (b-a)/2^{\log_2((b-a)/\epsilon)} \\
     &= (b-a)/((b-a)/\epsilon)) \\
     &= \epsilon.
\end{align*}

To show \emph{validity}, first note that if all the correct processes'
inputs are the same, say $a$, then by 
Invariant (I3), all correct processes' decisions are $a$.
Otherwise, validity holds since by Invariant (I2), every
correct process' decision is in $[a,b]$.
\end{proof}

\section{Canonical-Round Algorithm for Connected Consensus when $n > 5f$ }
\label{app:cc 5f}

We present a canonical-round algorithm for $R$-connected consensus
for any $R \ge 1$, assuming $n > 5f$.
(See Algorithm~\ref{alg:cc 5f}.)
It is an extension of Algorithm 2 in~\cite{AttiyaW2023} for $R = 1$.
To handle larger values of $R$, it uses an approximate agreement 
subroutine that is assumed to be canonical-round and that works for
$n > 5f$, such as that in~\cite{DolevLPSW1986}.
To our knowledge, it is the first connected consensus algorithm 
for arbitrary $R$ when $n > 5f$ that is in canonical-round form, 
as the one in~\cite{Vazquez25} uses reliable broadcast and a witness technique.

\begin{algorithm}[tb]
\caption{Byzantine-tolerant canonical-round $R$-connected consensus 
for arbitrary $R \ge 1$ ($n > 5f$); code for process $p_i$.  
AA is a canonical-round approximate agreement subroutine.}
\label{alg:cc 5f}

\begin{algorithmic}[1]{}
\small
\State send (1,$x_i$) to all 
           \Comment{round 1:  exchange branch candidates; $x_i$ is $p_i$'s input}
\State \textbf{wait} for $n-f$ round 1 messages;
        let $S_i$ be the multiset of values in received round 1 messages
\State \textbf{if} every element of \emph{trim}$(S_i,f)$ is some $v$ 
       \Comment{\emph{trim} drops $f$ largest and $f$ smallest elements of $S_i$}
\State \quad \textbf{then} $v_i \leftarrow v$; $g_i \leftarrow 1$ 
             \textbf{else} $v_i \leftarrow \bot$; $g_i \leftarrow 0$ 
             \textbf{endif}
         \label{line:cc 5f:round 1}
\State \textbf{if} $R = 1$
       \textbf{then} decide $(v_i,g_i)$
       \textbf{endif}
       \Comment{crusader agreement case; return after deciding}
\Statex \Comment{rest of code is for $R \ge 2$ case}
\State send $(2,v_i)$ to all
         \Comment{round 2: exchange branch candidates, now with only one 
                  non-$\bot$ value}
\State \textbf{wait} for $n-f$ round 2 messages; let $T_i$ be multiset
                     of values in received round 2 messages
\State \textbf{if} some $v \in V$ appears $f+1$ times in $T_i$
\Statex \quad\quad \textbf{then} $v_i \leftarrow v$
       \textbf{else} $v_i \leftarrow \bot$ \textbf{endif}
          \Comment{choose unique branch}
       \label{line:cc 5f:unique branch}
\State \textbf{if} some $v \in V$ appears $n-2f$ times in $T_i$
\Statex \quad\quad \textbf{then} $r_i \leftarrow R$ 
       \textbf{else} $r_i \leftarrow 0$ 
       \textbf{endif}
          \Comment{choose AA input}
       \label{line:cc 5f:choose AA input}
\State $g_i \leftarrow \lceil$AA$(1,r_i)\rceil$
      \Comment{$O(\log R)$ rounds to choose grade via AA with $\epsilon = 1$}
      \label{line:cc 5f:AA call}
\State \textbf{if} $g_i = 0$ 
       \textbf{then} decide $(\bot,0)$ 
       \textbf{else} decide $(v_i,g_i)$
       \textbf{endif}
       \Comment{will show $v_i \ne \bot$ when $g_i > 0$}
\end{algorithmic}
\end{algorithm}

%===============================================

We know the following about the results of round 1 from Lemma 7 
in~\cite{AttiyaW2023}, replacing variable name \emph{branch} with $v_i$:

\begin{lemma}[\cite{AttiyaW2023}]
\label{lem:cc 5f:AW2023}
Consider the values to which correct processes $p_i$ and $p_j$ set
their $v_i$ and $v_j$ variables in Line~\ref{line:cc 5f:round 1}.

\noindent
(a) If $v_i$ is set to $v \in V$,
then at least $n-4f$ correct processes have input $v$,
at least $n-3f$ correct processes have input at most $v$, and at
least $n-3f$ correct processes have input at least $v$.

\noindent
(b) If $v_i$ is set to $\bot$, then not all correct processes have the same 
input.

\noindent
(c) If $v_i$ is set to $u \in V$ and $v_j$ is set to $v \in V$, then $u = v$.
\end{lemma}

The next lemma implies that any non-$\bot$ branch chosen in 
Line~\ref{line:cc 5f:unique branch} is unique. 
It is analogous to Lemma~\ref{lem:cc 3f:unique branch}.

\begin{lemma}
\label{lem:cc 5f:unique branch}
If $v \in V$ occurs $f+1$ times in $T_i$ and $u \in V$ appears $f+1$ times
in $T_j$ for not necessarily distinct correct processes $p_i$ and $p_j$,
then $u = v$.
\end{lemma}

\begin{proof}
By part (c) of Lemma~\ref{lem:cc 5f:AW2023}, if any correct process
sets $v_k$ to a non-$\bot$ value in Line~\ref{line:cc 5f:round 1},
it sets it to the same non-$\bot$ value.
Thus every correct process that doesn't send $\bot$
sends the same non-$\bot$ value in its round 2 message.
At least one occurrence of $v$ in $T_i$ is from a correct process
and at least one occurrence of $u$ in $T_j$ is from a correct process.
Thus $v = u$.
\end{proof}

Next we show that the decisions are well-defined, in that the output
of any correct process is either $(\bot,0)$ or $(v,g)$ where $v \in V$
and $g > 0$.  
This lemma is analogous to Lemma~\ref{lem:cc 3f:well defined decision}.

\begin{lemma}
\label{lem:cc 5f:well defined decision}
If $R \ge 2$ and correct process $p_i$ decides $(v,g)$, then $g = 0$
if and only if $v = \bot$.
\end{lemma}

\begin{proof}
{\it Case 1:} $g_i$ is set to 0 in Line~\ref{line:cc 5f:AA call}.
Then by the code the decision is $(\bot,0)$ and so $v = \bot$.

{\it Case 2:} $g_i$ is set to a positive value in 
Line~\ref{line:cc 5f:AA call}.
By the validity property of approximate agreement, some correct
process $p_j$ calls AA with input $r_j = R$.
Thus there is some value $u \in V$ that appears $n-2f$ times in $T_i$.
At least $n-3f$ of these messages are from correct processes, and thus
at least $n-4f$ of these messages are also received by process $p_i$.
Since $n > 5f$, $n-4f \ge f+1$.
By Lemma~\ref{lem:cc 5f:unique branch}, $u$ is the only non-$\bot$
value that appears $f+1$ times in $T_i$.
Thus $p_i$ sets $v_i$ to $u$ in Line~\ref{line:cc 5f:unique branch}.
Since $u \ne \bot$ and $u$ is $v$, $v \ne \bot$.
\end{proof}

\begin{theorem}
Algorithm~\ref{alg:cc 5f} is correct for any value of $R \ge 1$
and satisfies binding.
\end{theorem}

\begin{proof}
The correctness and binding property are shown for $R = 1$
in~\cite{AttiyaW2023}.
Assume $R \ge 2$ for the remainder of the proof.

\emph{Termination} follows from the assumed correctness, including termination,
of the AA subroutine.

\emph{Validity:}  
Suppose correct process $p_i$ decides $(v,g)$.

(i) If $v \ne \bot$, we must show some correct process has input $v$.
By the code, $v$ is some value that appears $f+1$ times in $T_i$
(cf.\ Line~\ref{line:cc 5f:unique branch}).
At least one of these appearances of $v$ is from a correct process $p_j$.
Part (a) of Lemma~\ref{lem:cc 5f:AW2023} shows that some correct
process has input $v$.

(ii) If all correct processes have the same input, we must show $g = R$.
By (the contrapositive of) part (b) of Lemma~\ref{lem:cc 5f:AW2023},
since all correct processes have the same input, no correct process
has $v_i = \bot$ after round 1.  
By part (c) of Lemma~\ref{lem:cc 5f:AW2023},
they all send the same value $v$ in round 2. 
Thus every correct process receives at least $n-2f$ round 2 messages
for $v$ and sets its $r_i$ variable to $R$ in Line~\ref{line:cc 5f:choose AA input}.
By the validity property of approximate agreement, each correct process 
gets $R$ as the output of AA and decides $(v,R)$.
Thus the $g$ component in $p_i$'s decision equals $R$.

\emph{Agreement:}
Suppose correct process $p_i$ decides $(u,g)$ and correct process $p_j$
decides $(v,h)$.
By Lemma~\ref{lem:cc 5f:well defined decision}, 
$g = 0$ if and only if $u = \bot$, and $h = 0$ if and only if $v = \bot$.

If $g$ and $h$ are both 0, we have $(u,g) = (v,h)$.

Suppose exactly one of $g$ and $h$ is 0, say $g$.
By the 1-agreement property of approximate agreement, $h = 1$.
Thus $(u,g) = (\bot,0)$ and $(v,h) = (v,1)$ are neighbors.

Suppose neither $g$ nor $h$ is 0.
By the 1-agreement property of approximate agreement, $|g-h| \le 1$.
Since $u$ appears $f+1$ times in $T_i$ and $v$ appears $f+1$ times
in $T_j$ (cf.\ Line~\ref{line:cc 5f:unique branch}), 
Lemma~\ref{lem:cc 5f:unique branch} implies $u = v$.
Thus $(u,g)$ and $(v,h)$ label the same or neighboring vertices
on the branch for $v = u$.

\emph{Binding:} The algorithm satisfies an even stronger property of
``built-in'' binding (cf.\ \cite{AttiyaW2023}): the inputs alone
determine the branch along which the decisions occur.  Fix an
assignment of inputs and assume there is one execution from that
initial configuration $I$ in which a correct process $p_i$ sets its
$v_i$ variable to $u$ and another in which a correct process $p_j$
(which might or might not equal $p_i$) sets its $v_j$ variable to
$v$; WLOG, asume $u < v$.  By part (a) of Lemma~\ref{lem:cc 5f:AW2023}, in the
first execution at least $n-3f$ correct processes have inputs that are
at most $u$, while in the second at least $n-3f$ different correct
processes have inputs that are at least $v$.  Thus $n$ is at least
$2(n-3f)$ plus the $f$ faulty processes, which contradicts the
assumption that $n > 5f$.  Thus there exists $v \in V$ such that for
all executions from $I$, no value other than $v$ can appear in correct
processes' $v_i$ variables at the end of round 1.  Since no
correct process $p_i$ can get more than $f$ round 2 messages for any $u
\in V$ other than $v$, $p_i$ cannot assign $u$ to $v_i$
and cannot decide $(u,r)$ for any $r$.
\end{proof}

As in Appendix~\ref{app:2input-AA}, a general approximate agreement
algorithm is not needed inside Algorithm~\ref{alg:cc 5f}, since every
correct process' input is from a known 2-element set.  We can use the
same 2-input approximate agreement algorithm for the $n > 5f$ case as
in the $n > 3f$ case (Algorithm~\ref{alg:2-input aa}), but replace
the pinch subroutine (Algorithm~\ref{alg:pinch}) with one that is simpler and in canonical-round
form; see Algorithm~\ref{alg:pinch 5f}.

\begin{algorithm}[tb]
\caption{Pinch algorithm with $n > 5f$; code for process $p_i$.}

\label{alg:pinch 5f}

\begin{algorithmic}[1]{}
\small
\State $A_i \leftarrow \emptyset$
\State send (1,$x_i$) to all \Comment{$x_i$ is $p_i$'s input}
\State wait for $n-f$ round 1 messages
\State $A_i \leftarrow \{x | x$ appears at least $f+1$ times in received
         round 1 messages$\}$
\State \textbf{return} $A_i$
\end{algorithmic}
\end{algorithm}

\begin{theorem}
Algorithm~\ref{alg:pinch 5f} for pinch is correct, assuming every correct 
process' input is chosen from a set of size 2.
\end{theorem}

\begin{proof}
We first show the validity property for pinch.
Suppose a value $x$ is in the output set of correct process $p_i$.
Then $p_i$ received $n-f$ round 1 messages for $x$, at least one of
which is from a correct process $p_j$.
Thus $x$ is the input of $p_j$.

We argue the agreement property by contradiction.
Suppose correct process $p_i$ outputs $\{x_i\}$ and correct process
$p_j$ outputs $\{x_j\}$, where $x_i \ne x_j$.
By validity, $x_i$ and $x_j$ are inputs of correct processes,
and no other value can be the input of a correct process.
By the code, $p_i$ receives $n-f$ round 1 messages,
at least $f+1$ of them are for $x_i$,
at most $f$ of them are for $x_j$, and $m$ of them are for other values.
Since only faulty processes can send values other than $x_i$ and $x_j$,
$m \le f$.
The number of round 1 messages received by $p_i$ for $x$ is
$n-2f-m$.
Since $m$ faulty processes are budgeted as the senders of values other
than $x_i$ and $x_j$,
the remaining number of faulty processes that can send $x_i$ is
at most $f-m$.
Thus at least $(n-2f-m)-(f-m) = n-3f$ of the round 1 messages for $x_i$ 
received by $p_i$ are from correct processes.

Process $p_j$ receives round 1 messages from at least $n-4f$ of
the correct processes from which $p_i$ received round 1 messages for $x_i$.
Since $n > 5f$, it follows that $n-4f \ge f+1$, and thus $p_j$ adds
$x_i$ to $A_j$.
This contradicts the assumption that $p_j$'s output is $\{x_j\}$.
\end{proof}

\end{document}